\documentclass[sigplan,nonacm]{acmart}

\usepackage{xspace}
\usepackage{tikz}
\usepackage{booktabs}
\usetikzlibrary{positioning, arrows.meta, shapes.geometric, calc, fit, backgrounds, decorations.pathreplacing, patterns}

\usepackage{listings}
\newcommand{\sys}{Vosti\xspace}

\newcommand{\pass}[1]{\textcolor{green!50!black}{#1}}
\newcommand{\fail}[1]{\textcolor{red}{#1}}

\lstdefinelanguage{Dafny}{
  morekeywords={datatype, method, function, returns, requires, ensures,
                modifies, var, if, else, while, forall, exists,
                seq, set, map, nat, int, bool, true, false},
  morecomment=[l]{//},
  morestring=[b]",
}
\lstdefinelanguage{Verus}{
  morekeywords={pub, struct, enum, fn, impl, trait, spec, proof, exec,
                open, closed, requires, ensures, invariant, decreases,
                forall, exists, let, ghost, tracked, mut, match, if, else,
                while, for, return, usize, u64, u32, bool, nat, int,
                Vec, Seq, Map, Set, Option, true, false},
  morecomment=[l]{//},
  morestring=[b]",
}
\definecolor{annText}{HTML}{24292E}
\definecolor{annKeyword}{HTML}{005CC5}
\definecolor{annDecorator}{HTML}{6F42C1}
\definecolor{annFunction}{HTML}{795E26}
\definecolor{annTensor}{HTML}{006B70}
\definecolor{annNumber}{HTML}{953800}
\definecolor{annComment}{HTML}{6A737D}
\definecolor{annBackground}{HTML}{F6F8FA}
\definecolor{annBorder}{HTML}{D1D5DA}
\lstdefinelanguage{VerifAnnotation}{
  sensitive=true,
  alsoletter={@},
  morekeywords=[1]{@verif,@params,@grid},
  morekeywords=[2]{same,pre,post,singleton},
  morekeywords=[3]{left,right},
  morekeywords=[4]{a,b,c}
}
\lstdefinestyle{verifannotation}{
  language=VerifAnnotation,
  basicstyle=\ttfamily\small\color{annText},
  keywordstyle=[1]\color{annDecorator},
  keywordstyle=[2]\color{annKeyword}\bfseries,
  keywordstyle=[3]\color{annFunction},
  keywordstyle=[4]\color{annTensor},
  backgroundcolor=\color{annBackground},
  frame=single,
  rulecolor=\color{annBorder},
  framerule=0.4pt,
  framesep=4pt,
  xleftmargin=5pt,
  xrightmargin=5pt,
  literate=
    {\#}{{\textcolor{annComment}{\#}}}1
    {0}{{\textcolor{annNumber}{0}}}1
    {1}{{\textcolor{annNumber}{1}}}1
    {2}{{\textcolor{annNumber}{2}}}1
    {3}{{\textcolor{annNumber}{3}}}1
    {4}{{\textcolor{annNumber}{4}}}1
    {5}{{\textcolor{annNumber}{5}}}1
    {6}{{\textcolor{annNumber}{6}}}1
    {7}{{\textcolor{annNumber}{7}}}1
    {8}{{\textcolor{annNumber}{8}}}1
    {9}{{\textcolor{annNumber}{9}}}1
}

\def\Snospace~{\S{}}

\newcommand{\paraspace}{\vspace{0.05in}}
\newcommand{\parab}[1]{\paraspace\noindent{\bf #1} }

\DeclareMathSizes{8.05}{8.05}{8.05}{8.05}
\DeclareMathSizes{10.01}{10.01}{8.05}{8.05}

\begin{document}
\title{\sys: Specifying, Implementing, and Verifying Deterministic LLM Inference}

\author{Jianxing Qin}
\affiliation{
  \institution{Duke University}
  \city{Durham}
  \state{NC}
  \country{United States}
}

\author{Alexander Du}
\affiliation{
  \institution{Duke University}
  \city{Durham}
  \state{NC}
  \country{United States}
}

\author{Danfeng Zhang}
\affiliation{
  \institution{Duke University}
  \city{Durham}
  \state{NC}
  \country{United States}
}

\author{Matthew Lentz}
\affiliation{
  \institution{Duke University}
  \city{Durham}
  \state{NC}
  \country{United States}
}

\author{Danyang Zhuo}
\affiliation{
  \institution{Duke University}
  \city{Durham}
  \state{NC}
  \country{United States}
}

\begin{abstract}

LLM inference systems may vary batch composition, prompt chunking, prefill/decode
execution, and KV-cache reuse, eviction, or recomputation. These
optimizations should not affect system outputs. Production systems,
including vLLM's batch-invariant mode and SGLang's deterministic mode, target
this goal but lack a formal system-level specification.

We formalize deterministic LLM inference: under a fixed model and deployment
configuration, requests with the same prompt and initial sampler state produce
bitwise-identical logits at corresponding output positions across executions.
Our tests find that these production modes produce different logits under some
execution variations. To address this limitation, we present \sys, an inference engine designed and
verified against this specification. \sys chooses kernels independently of runtime engine state and
ties cached KV values to their logical token prefixes. Its proof decomposes
at the engine/GPU kernel boundary: a Verus inductive proof establishes that
scheduling and the paged, prefix-sharing KV-cache preserve output logits,
while a Triton analyzer proves bitwise-equal selected kernel outputs across
batches, query lengths, and paged KV-cache layouts. \sys produces bitwise-identical
logits across every tested execution variation and achieves performance comparable to vLLM's
batch-invariant mode on decode-heavy workloads, while providing a stronger,
formally verified determinism guarantee.

\end{abstract}

\maketitle 
 

\section{Introduction}
LLM inference frameworks rarely execute a request in one fixed way.
As load and GPU memory pressure change, the server may co-batch the request with different companions~\cite{yu2022orca}, split its prompt into prefill chunks of different sizes~\cite{agrawal2024chunkedprefill}, compute a multi-token prefill or a single-token decode through speculative decoding~\cite{leviathan2023specdecoding, chen2023specdecoding}, and evict, relocate, share, or recompute its KV-cache state~\cite{kwon2023pagedattention}.
These choices may affect performance but should not affect model results: for a fixed model, prompt, and sampler state, the logits at a logical token position should not depend on how the server computes it.
Yet they can produce different logits.
GPU kernels select tilings and reduction orders based on execution shapes, and floating-point results depend on the resulting operation order~\cite{he2025defeating}.
The scheduler and paged cache also determine the tokens and KV state supplied to those kernels, so different executions can change kernel inputs.

Recent work addresses one source of these differences: batch-dependent kernels.
In under a year, batch-invariant kernels went from a blog post~\cite{he2025defeating} to production modes: vLLM ships a batch-invariant mode~\cite{vllm2025bitwise, vllm2026batchinvariance} and SGLang ships a deterministic mode~\cite{sglang2025deterministic}; DeepSeek-V4 builds end-to-end batch-invariant kernels into its training stack~\cite{deepseek2026v4}.
But batch composition is only one way a server varies execution, and no formal definition states what deterministic inference requires from the system as a whole.

\parab{Specification.}
Our first contribution is a system-level definition of deterministic LLM inference.
An LLM inference system is \emph{deterministic} if, for a fixed model, deployment configuration, prompt, and sampler state, every observed execution produces bitwise-identical logits at each output position; equal sampler states then yield prefix-comparable output traces.
The definition constrains observable outputs only; it does not assume batching, prompt chunking, or a KV-cache.
We vary batch sizes, chunked-prefill budgets, prefill versus decode, and prefix reuse in production LLM inference systems.
Both vLLM’s batch-invariant mode and SGLang’s deterministic mode produce logit mismatches.
Representative mismatches arise from shape-dependent kernel specialization and different attention reduction arithmetic across execution paths.
These results expose scope limits rather than necessarily indicating bugs: one kernel configuration's batch invariance does not establish determinism across chunking, prefill/decode transitions, or cache reuse.

\parab{\sys.} Our second contribution is the design and implementation of \sys, an LLM inference system designed and verified to provide deterministic inference. \sys's engine is implemented and formally verified using Verus~\cite{lattuada2024verus}.
It continuously batches prefill chunks and decode tokens, where each may carry arbitrary KV-cache pages.

We prove in Verus that each engine step produces logits determined solely by the request's token history. The proof maintains a central KV-cache invariant alongside request, layout, and
write-isolation invariants. The KV-cache invariant ensures that every live or
retained reusable KV entry equals the value obtained by recomputing its token
prefix without cache.
Induction over engine steps then establishes request determinism.

\sys preserves engine optimizations such as continuous batching, chunked
prefill, and KV-cache reuse, while constraining kernel choices that could
change floating-point results. It uses fixed, runtime-independent kernel
configurations, retaining tiling and independent parallelism.
Dispatch among bitwise-equivalent kernels remains a future extension.
Our analyzer translates annotated Triton~\cite{tillet2019triton} into a tile-level
IR and uses Z3~\cite{demoura2008z3} to prove bitwise equality of selected
kernel outputs across kernel invocations.
At runtime, \sys rejects a missing configuration rather than autotuning.

\parab{Evaluation.}
\sys passes all 5,488 bitwise comparisons across seven Llama and Gemma
models, covering batching, chunked prefill, prefill--decode equivalence, and prefix
reuse. Its performance is comparable to production inference engines on
decode-heavy workloads. On an H200 GPU, \sys achieves
$1.36$--$3.19\times$ the decode throughput of vLLM's batch-invariant modes
and completes sessions $1.14$--$2.20\times$ faster in a synthetic agentic
workload, while remaining slower than both default modes and SGLang's
deterministic modes.

\section{Specifying Deterministic LLM Inference}
\label{sec:background}

\begin{table}[t]
\centering
\footnotesize
\setlength{\tabcolsep}{3pt}
\caption{Motivating determinism tests of \sys and production inference engines.
Each cell reports bitwise-matched logit pairs divided by total compared pairs.
Green denotes full agreement; red denotes at least one mismatch.}
\label{tab:production-motivation}
\begin{tabular}{lllcccc}
\toprule
\textbf{Engine} &
\textbf{Attn.} &
\textbf{Model} &
\textbf{Batch} &
\textbf{Chunk} &
\textbf{P--D} &
\textbf{Prefix} \\
\midrule
vLLM        & FA3    & Llama3.1-8B & \fail{177/296} & \fail{49/90} & \fail{53/384}  & \fail{3/14}  \\
vLLM Inv.   & FA3    & Llama3.1-8B & \pass{296/296} & \pass{90/90} & \pass{384/384} & \pass{14/14} \\
vLLM Inv.   & Triton & Llama3.1-8B & \pass{296/296} & \pass{90/90} & \pass{384/384} & \pass{14/14} \\
SGLang      & FA3    & Llama3.1-8B & \fail{156/296} & \fail{50/90} & \fail{59/384}  & \fail{4/14}  \\
SGLang Det. & FA3    & Llama3.1-8B & \pass{296/296} & \pass{90/90} & \pass{384/384} & \pass{14/14} \\
SGLang Det. & Triton & Llama3.1-8B & \pass{296/296} & \pass{90/90} & \fail{3/384}   & \fail{13/14} \\
\sys        & Triton & Llama3.1-8B & \pass{296/296} & \pass{90/90} & \pass{384/384} & \pass{14/14} \\
\midrule
vLLM        & FA3    & Gemma3-4B & \fail{70/296}  & \fail{0/90}  & \fail{5/384}   & \fail{3/14}  \\
vLLM Inv.   & FA3    & Gemma3-4B & \fail{193/296} & \fail{0/90}  & \fail{5/384}   & \fail{11/14} \\
vLLM Inv.   & Triton & Gemma3-4B & \fail{242/296} & \fail{0/90}  & \fail{192/384} & \fail{13/14} \\
SGLang      & FA3    & Gemma3-4B & \fail{81/296}  & \fail{42/90} & \fail{5/384}   & \fail{2/14}  \\
SGLang Det. & FA3    & Gemma3-4B & \fail{206/296} & \fail{70/90} & \fail{5/384}   & \fail{10/14} \\
SGLang Det. & Triton & Gemma3-4B & \fail{256/296} & \fail{70/90} & \fail{3/384}   & \fail{9/14}  \\
\sys        & Triton & Gemma3-4B & \pass{296/296} & \pass{90/90} & \pass{384/384} & \pass{14/14} \\
\bottomrule
\end{tabular}
\end{table}

\subsection{Why Deterministic Inference?}
An LLM server can realize one output token sequence through many executions.  Continuous
batching changes a request's companions as requests arrive and finish; chunked
prefill partitions a prompt according to a token budget; speculative decoding
computes several candidate positions in an extend step; and a paged KV-cache may
share, relocate, evict, or recompute a logical prefix~\cite{yu2022orca,
kwon2023pagedattention, leviathan2023specdecoding, chen2023specdecoding}.
These choices should change performance, not model results.

Both GPU kernels and the inference engine can change these results.  GPU kernels
select tilings, reduction orders, and sometimes entire implementations based
on tensor shapes.  As floating-point addition is not associative, changing
the reduction order can change the resulting bit pattern~\cite{he2025defeating}.
The scheduler and cache determine the tokens, positions, cached prefix, and metadata supplied to those kernels.  Determinism must therefore hold
across the whole system: invariant kernels cannot compensate for an engine that
supplies a different input, and correct engine state is insufficient if
a kernel changes its value-relevant operation order.

Different logits can change stochastic draft
acceptance~\cite{leviathan2023specdecoding, chen2023specdecoding} and violate the
prefill--decode equivalence expected by speculative
verification~\cite{zhang2026batchspec}.  They can also create a mismatch between
rollout and training log-probabilities in reinforcement
learning~\cite{vllm2025bitwise}, make a
continuation depend on whether KV state was retained or recomputed, and prevent
an operator from reproducing one request independently of the load and cache
history under which it originally ran.

\subsection{Measuring Production Deterministic Modes}
\label{sec:production-measurements}

We measure whether production inference engines and \sys preserve bitwise
logit equality under different execution variations. We run vLLM 0.28.0
in its default and batch-invariant modes, and SGLang~\cite{zheng2024sglang}
0.5.19 in its default
and deterministic modes. For both production deterministic modes, we test
FlashAttention 3 (FA3)~\cite{shah2024flashattention3} and Triton attention backends. Default modes select
their backends automatically; both select FA3 for both models.
We also run \sys using its verified Triton kernels. All experiments use Llama3.1-8B and text-only Gemma3-4B
in BF16 on one NVIDIA H200. Each comparison checks
bitwise equality of the complete raw next-token logit vectors.
Table~\ref{tab:production-motivation} reports matched pairs over total
compared pairs.

\parab{Test construction.}
\emph{Batch composition} runs the same prompts individually and in
batches of different sizes and orders.
We compare each prompt's last logits.
\emph{Chunked prefill} prefills the same prompt under different chunk
budgets, including an unchunked reference, and compares the last logits.
\emph{Prefill--decode} generates a fixed-length response, then
prefills the original prompt plus successive prefixes of that response
in a separate engine. We compare the logits
at corresponding prediction positions, ensuring identical token histories.
\emph{Prefix reuse} compares cold-prefill logits against those obtained
after another request populates the cache in the same engine as the warm
request. Cases cover full and partial prefix reuse; recorded cache-hit counts validate reuse where required.

\parab{Results.}
\sys matches all pairs per model. On Llama3.1-8B, both vLLM batch-invariant backends and SGLang deterministic
FA3 match every pair. SGLang deterministic Triton matches all batch and
chunk comparisons, but only 3/384 prefill--decode pairs and 13/14 prefix
pairs.
On Gemma3-4B, all four production deterministic configurations exhibit
mismatches in every category.
These modes usually reduce mismatches relative to defaults, but bitwise logit agreement still depends on model, backend, and execution variation.

These tests are not exhaustive verification:
passing does not establish determinism beyond the tested cases.
Conversely, a mismatch demonstrates numerical variation, not necessarily
a bug for the intended use case.

\subsection{A System-Level Formulation}
\label{sec:serving-spec}

Fix an inference configuration \(W\), comprising the model weights, numerical
dtypes, and all value-relevant aspects of the platform and implementation.
An output record is a pair
\[
  (\ell,y)
  \in \mathsf{Logits} \times \mathsf{Token},
\]
where \(\ell\) is the logit bit pattern and \(y\) is the emitted token.

Let \(\mathsf{Id}\) be the set of request identifiers.  At each engine step,
the observable output is a finite partial map
\[
  O_i:\mathsf{Id}\rightharpoonup
      (\mathsf{Logits}\times\mathsf{Token}).
\]
Thus, an engine step may emit one output record for each of zero or more
requests, without imposing an order among requests processed in the same
step.

For a request \(r\), let \(s_0\) be any engine state reachable under \(W\)
immediately after \(r\) is admitted and before it produces any output.
This state may contain other requests and retained cache state.
A finite execution starting from \(s_0\) has the form
\[
  s_0 \xrightarrow{O_1} s_1
      \xrightarrow{O_2} \cdots
      \xrightarrow{O_n} s_n .
\]
We identify its observable trace with the finite sequence
\[
  T=(O_1,O_2,\ldots,O_n).
\]
Define the per-request view of \(r\) by
\[
  \mathsf{view}_r(T)
  =
  \bigl(
    O_i(r.\mathsf{id})
    \mathrel{\big|}
    1\leq i\leq n
    \ \land\
    r.\mathsf{id}\in\operatorname{dom}(O_i)
  \bigr),
\]
where the records are ordered by increasing \(i\).  This view retains the
observable history of \(r\), but does not retain its identifier, its batch
position, steps in which it produced no output, or outputs belonging to other
requests.

\begin{definition}[Deterministic LLM inference]
\label{def:system-determinism}
An inference system is \emph{deterministic} if, for every pair of finite
executions \(T_1,T_2\) that share configuration \(W\), and every
pair of requests \(r_1,r_2\) newly admitted in their respective initial
states as above, whenever their prompts and initial sampler states are the same:
\[
  r_1.\mathsf{prompt}=r_2.\mathsf{prompt}
  \quad\text{and}\quad
  r_1.\mathsf{sampler}_0=r_2.\mathsf{sampler}_0,
\]
their per-request views are prefix-comparable:
\[
  \mathsf{view}_{r_1}(T_1)
  \preceq
  \mathsf{view}_{r_2}(T_2)
  \quad\lor\quad
  \mathsf{view}_{r_2}(T_2)
  \preceq
  \mathsf{view}_{r_1}(T_1).
\]
Here, \(\preceq\) denotes the prefix relation. Equality of logit components means equality of their bit patterns.
\end{definition}

Intuitively, a global execution may interleave many requests and may batch
them differently at each engine step.  The operation
\(\mathsf{view}_r\) removes these scheduling details and retains only the
observable history of request \(r\).  It also erases the request identifier:
identifiers distinguish request instances but must not affect corresponding outputs.

Because the executions are finite and may schedule requests differently, their
per-request views need not have equal lengths.  Prefix comparability requires
the shorter view to agree exactly with the beginning of the longer one.
Equivalently, corresponding requests produce identical logit bits and tokens
at every output index reached by both executions.

This definition is not a liveness property.  It constrains
outputs whenever both executions produce the corresponding records, but does
not require an admitted request to be scheduled, to make progress, or to
complete.
It is also not a functional-correctness specification: a system may satisfy determinism while producing logits or tokens that differ from those prescribed by the intended model semantics.

\section{Overview}
\label{sec:overview}

\begin{figure*}[t]
  \centering
  \resizebox{\textwidth}{!}{
\begin{tikzpicture}[
  x=1cm, y=1cm,
  font=\sffamily\fontsize{8.05}{9.5}\selectfont,
  line width=0.45pt,
  box/.style={draw=black!55, fill=white, rounded corners=2pt,
    align=center, inner sep=3pt},
  heading/.style={font=\sffamily\bfseries\fontsize{8.05}{9.5}\selectfont},
  smalltext/.style={font=\sffamily\fontsize{8.05}{9.5}\selectfont},
  outputrecord/.style={font=\sffamily\fontsize{10.01}{11}\selectfont},
  legendtext/.style={smalltext, anchor=west, inner sep=0pt},
  tool/.style={box, draw=orange!75!black, fill=orange!8},
  requestband/.style={draw=blue!30!black, fill=blue!2, rounded corners=3pt},
  engineband/.style={draw=green!35!black, fill=green!5, rounded corners=3pt},
  kernelband/.style={draw=violet!40!black, fill=violet!5, rounded corners=3pt},
  arrow/.style={-{Stealth[open,length=1.6mm,width=1.2mm]}, draw=black!70},
  checkarrow/.style={arrow, draw=orange!75!black, line width=0.55pt},
  checkscope/.style={draw=orange!75!black, line width=0.6pt},
  proofarrow/.style={-{Stealth[open,length=1.8mm,width=1.4mm]},
    draw=blue!55!black, line width=0.7pt},
  mapping/.style={arrow, dashed, draw=black!55},
  annotation/.style={smalltext, fill=none, inner sep=1pt, align=center},
  selected/.style={draw=blue!65!black, fill=blue!14, rounded corners=1.5pt,
    minimum width=0.58cm, minimum height=0.34cm, inner sep=1pt},
  other/.style={draw=black!25, fill=black!7, rounded corners=1.5pt,
    minimum width=0.50cm, minimum height=0.34cm, inner sep=0pt}
]
  \useasboundingbox (0,-0.82) rectangle (17.6,8.60);

  \draw[requestband] (3.15,6.05) rectangle (13.85,8.60);
  \draw[engineband] (3.15,3.70) rectangle (13.85,5.68);
  \draw[kernelband] (3.15,0.45) rectangle (13.85,2.45);

  \node[heading, anchor=west] at (3.43,8.27) {Request determinism};
  \node[smalltext, anchor=east] at (13.55,8.27) {\S\,2};
  \node[smalltext, anchor=west] at (3.43,7.85)
    {$r_{\mathsf{1}},r_{\mathsf{2}}$: same prompt + initial sampler state; fixed model/deployment};
  \node[anchor=east] at (3.92,7.31) {$T_{\mathsf{1}}$};
  \node[anchor=east] at (3.92,6.76) {$T_{\mathsf{2}}$};
  \draw[arrow, draw=black!30] (4.00,7.31) -- (7.94,7.31);
  \draw[arrow, draw=black!30] (4.00,6.76) -- (7.94,6.76);
  \foreach \x/\label in {4.30/0,6.19/1,7.45/2}
    \node[selected, outputrecord, minimum width=0.50cm] at (\x,7.31) {$o_{\mathsf{\label}}$};
  \foreach \x/\label in {4.93/0,6.82/1}
    \node[selected, outputrecord, minimum width=0.50cm] at (\x,6.76) {$o_{\mathsf{\label}}$};
  \foreach \x in {4.93,5.56,6.82}
    \node[other] at (\x,7.31) {};
  \foreach \x in {4.30,5.56,6.19,7.45}
    \node[other] at (\x,6.76) {};
  \node[annotation] (viewone) at (8.52,7.31) {$\mathsf{view}_{r_{\mathsf{1}}}$};
  \node[annotation] (viewtwo) at (8.52,6.76) {$\mathsf{view}_{r_{\mathsf{2}}}$};
  \draw[mapping, -] (8.02,7.31) -- (viewone.west);
  \draw[mapping] (viewone.east) -- (9.12,7.31);
  \draw[mapping, -] (8.02,6.76) -- (viewtwo.west);
  \draw[mapping] (viewtwo.east) -- (9.12,6.76);
  \foreach \y in {7.31,6.76} {
    \foreach \x/\label in {9.50/0,10.27/1}
      \node[selected, outputrecord] at (\x,\y) {$o_{\mathsf{\label}}$};
  }
  \node[selected, outputrecord] at (11.04,7.31) {$o_{\mathsf{2}}$};
  \draw[draw=blue!55!black] (11.53,6.76) -- (11.70,6.76)
    -- (11.70,7.31) -- (11.53,7.31);
  \node[smalltext, align=left, anchor=west, text=blue!55!black]
    at (11.78,7.035) {Same logit bits\\and tokens on\\common prefix};
  \node[smalltext, anchor=west] at (3.43,6.27)
    {Output: $o_i=(\ell_i,y_i)$};

  \node[heading, anchor=west] at (3.43,5.42) {Inference engine};
  \node[smalltext, anchor=east] at (13.55,5.42) {\S\,4};
  \node[box, text width=3.10cm, minimum height=1.22cm] (engine) at (5.15,4.52)
    {Scheduler\\KV-cache manager\\Model executor};
  \node[box, text width=4.65cm, minimum height=1.22cm] (engineproof) at (10.40,4.65)
    {\textbf{Engine proof}\\Preserves invariants\\
     $\ell=\operatorname{Forward}(W,h)$};
  \draw[mapping] (engine.east) -- (engineproof.west);
  \draw[proofarrow] (engineproof.north) -- (10.40,6.05);
  \node[annotation, anchor=west] at (10.62,5.85) {Inductive proof};

  \node[tool, text width=2.15cm, minimum height=0.55cm] (external) at (1.40,4.99)
    {\textbf{External APIs}};
  \node[tool, text width=2.15cm, minimum height=0.65cm] (launcher) at (1.40,4.09)
    {\textbf{Kernel launch code}};
  \node[box, text width=2.15cm, minimum height=0.65cm] (gpu) at (1.40,3.00)
    {\textbf{Compiled kernel}};
  \node[tool, text width=2.15cm, minimum height=0.65cm] (compiler) at (1.40,1.30)
    {\textbf{Triton compiler}};
  \draw[arrow] (engine.west |- external.east) -- (external.east);
  \draw[arrow] (engine.west |- launcher.east) -- (launcher.east);
  \draw[arrow] (launcher.south) --
    node[annotation, right, xshift=2mm] {invokes} (gpu.north);
  \draw[arrow] (compiler.north) -- (gpu.south);

  \node[box, text width=5.65cm, minimum height=0.78cm] (contract) at (10.40,3.00)
    {\textbf{Kernel contracts}\\Bitwise equality of selected outputs};
  \draw[proofarrow] (contract.north) -- (engineproof.south);
  \node[annotation, anchor=west] at (10.62,3.57) {Imported};

  \node[heading, anchor=west] at (3.43,2.18) {GPU kernels};
  \node[smalltext, anchor=east] at (13.55,2.18) {\S\,5};
  \node[box, text width=3.10cm, minimum height=1.03cm] (source) at (5.15,1.30)
    {\textbf{Triton source + annotations}\\Fixed configuration};
  \draw[arrow] (source.west) -- (compiler.east);

  \begin{scope}[yshift=-0.40cm]
  \node[smalltext, anchor=east] at (8.37,1.98) {Batched};
  \node[smalltext, anchor=east] at (8.37,1.30) {Alone};
  \foreach \x in {9.10,11.50} {
    \foreach \y in {1.73,1.98,2.23}
      \draw[other, rounded corners=1pt]
        (\x-0.55,\y-0.095) rectangle (\x+0.55,\y+0.095);
  }
  \foreach \y in {1.98,1.30} {
    \node[selected, minimum width=1.10cm, minimum height=0.26cm,
      smalltext] at (9.10,\y) {$x_r$};
    \node[selected, minimum width=1.10cm, minimum height=0.26cm,
      smalltext] at (11.50,\y) {$z_r$};
    \draw[arrow] (9.83,\y) -- (10.77,\y);
  }
  \node[annotation, above] at (10.30,2.02) {$K$};
  \node[annotation, above] at (10.30,1.34) {$f_K$};
  \draw[draw=blue!55!black] (12.23,1.30) -- (12.40,1.30)
    -- (12.40,1.98) -- (12.23,1.98);
  \node[align=left, anchor=west, text=blue!55!black]
    at (12.58,1.64) {Bitwise\\equal};
  \end{scope}

  \draw[checkscope] (13.92,8.60) -- (14.12,8.60)
    -- (14.12,3.12) -- (13.92,3.12);
  \node[tool, smalltext, text width=2.90cm, minimum height=1.15cm] (verus) at (16.03,6.15)
    {\textbf{Verus + Z3}\\[2pt]
     Uses contracts\\as assumptions};
  \draw[checkarrow] (verus.west) -- (14.12,6.15);
  \draw[checkscope] (13.92,2.88) -- (14.12,2.88)
    -- (14.12,0.45) -- (13.92,0.45);
  \node[tool, smalltext, text width=2.90cm, minimum height=1.45cm] (kernelverifier) at (16.03,1.70)
    {\textbf{Our kernel verifier\\+ Z3}\\[2pt]
     Checks kernels satisfy\\contracts};
  \draw[checkarrow] (kernelverifier.west) -- (14.12,1.70);

  \begin{scope}[yshift=0.45cm]
  \node[tool, minimum width=0.65cm, minimum height=0.30cm, inner sep=0pt]
    at (0.625,-0.53) {};
  \node[legendtext] at (1.12,-0.53) {Trusted component};
  \node[box, minimum width=0.65cm, minimum height=0.30cm, inner sep=0pt]
    at (5.225,-0.53) {};
  \node[legendtext] at (5.72,-0.53) {Code / proof artifact};
  \node[selected, minimum width=0.65cm, minimum height=0.30cm] at (9.825,-0.53) {};
  \node[legendtext] at (10.32,-0.53) {Selected request data};
  \node[other, minimum width=0.65cm, minimum height=0.30cm] at (14.425,-0.53) {};
  \node[legendtext] at (14.92,-0.53) {Other request data};
  \draw[arrow] (0.30,-1.02) -- (0.95,-1.02);
  \node[legendtext] at (1.12,-1.02) {Data / execution};
  \draw[mapping] (4.90,-1.02) -- (5.55,-1.02);
  \node[legendtext] at (5.72,-1.02) {Projection / correspondence};
  \draw[proofarrow] (9.50,-1.02) -- (10.15,-1.02);
  \node[legendtext] at (10.32,-1.02) {Proof composition};
  \draw[checkscope] (14.55,-0.86) -- (14.75,-0.86)
    -- (14.75,-1.18) -- (14.55,-1.18);
  \draw[checkarrow] (14.10,-1.02) -- (14.75,-1.02);
  \node[legendtext] at (14.92,-1.02) {Tool check / scope};
  \end{scope}
\end{tikzpicture}}
  \caption{\sys's inference system and two-layer verification. Brackets show
  each verifier's scope; Verus assumes the shared kernel contracts, while our
  kernel verifier proves them.}
  \Description{The center stacks request determinism, the inference engine,
  shared kernel contracts, and GPU kernel source. Two sequences of output events
  interleave the selected requests' blue boxes with other requests' gray boxes.
  Request r1 in trace T1 and request r2 in trace T2 have the same prompt and
  initial sampler state under a fixed model and deployment. Their projected
  views contain three and two outputs, respectively, with identical logit bits
  and tokens on their common prefix. Spacing denotes order, not elapsed time.
  On the left, trusted external APIs and kernel launch code lie outside the
  engine verification scope. The launch code invokes compiled GPU kernels;
  the trusted Triton compiler produces these kernels from the annotated source.
  Kernel contracts require bitwise equality of selected outputs and support
  the engine's invariant-preservation proof and request determinism by induction.
  On the right, an upper orange bracket connects Verus plus Z3 to the inference
  engine, its proof, the induction, and the request-determinism statement,
  with the kernel contracts as assumptions. The contracts enter the engine
  proof through a blue arrow labeled Imported. A lower orange bracket connects our kernel
  verifier plus Z3 to the entire GPU kernels box and the kernel contracts.
  The two brackets end just above and below the vertical midpoint of the
  kernel contracts box, with a small gap separating them.
  Subtitles inside the tool boxes distinguish their roles: Verus uses contracts
  as assumptions, while our kernel verifier checks that kernels satisfy them.
  The verifier takes the annotated Triton source and fixed configuration as
  input and proves the contracts.
  Directly below the contracts, an example compares bitwise-equal selected
  outputs of batched and singleton runs.
  A unified legend shows orange boxes for trusted components,
  neutral boxes for code and proof artifacts, blue boxes for selected request
  data, and gray boxes for other request data. Its arrow samples distinguish
  data flow and execution, projection and correspondence, proof composition,
  and tool checks with their scope brackets. Background colors denote layers,
  not trust.}
  \label{fig:overview}
\end{figure*}

\sys is a deterministic LLM inference system that produces bitwise-identical logits for a
request across the system's execution variations. \autoref{fig:overview}
shows \sys's implementation and proof composition; brackets mark each
verifier's scope. The inference engine schedules and batches tokens, manages
cached prefixes, and invokes GPU kernels.

\sys follows one design rule: \textit{for a fixed model and deployment
configuration, the runtime may vary token batching, state storage, or state
access, while preserving each position's floating-point operation sequence}.
This rule does not intrinsically require a unique kernel or tiling. However, rather than proving bitwise equivalence across alternative implementations, \sys pins kernels and launch configurations selected solely based on the fixed model and environment. Machine-checked engine proofs and the kernel verifier then establish that runtime variation (e.g., batch size, KV-cache hit length, and request scheduling) changes neither a position’s logical inputs nor its floating-point operation sequence.
For example, matmul block sizes may vary with the layer's weight dimensions while remaining fixed across runtime batch sizes.

We verify \sys in two layers. At the engine layer, each GPU operation $K$ is
represented by an uninterpreted function $f_K$ that gives the operation's
single-request result. Verus checks that engine steps preserve request and
cache invariants and that each emitted logit vector depends only on the
request's token history, assuming that the result selected for request $r$
is $f_K(x_r)$ for its logical input $x_r$. The proof reasons about request
states, page tables, and cache sharing and updates without
modeling Triton or floating-point arithmetic.
Assuming the kernel contracts, Verus also checks the induction over engine
steps: requests with the same prompt and initial sampler state produce
per-request views that agree on their common prefix, as illustrated at the
top of \autoref{fig:overview}.

Our kernel verifier proves the contracts in \autoref{fig:overview} under
the stated assumptions. For
each kernel, it proves that the selected output of invocation $K(X)$ equals $f_K(x_r)$ when $x_r$ denotes the same logical input.
The linear-projection example at the bottom of \autoref{fig:overview}
illustrates this relation for batched and singleton invocations.
The two invocations may differ in batch layout, query length, whether the
position is computed during prefill or decode, and physical page tables.  The
proof establishes that both invocations perform the same operations, in the
same order, on the selected input.  Configuration checks
then bind the proved launch parameters to the runtime invocation.
The trusted Triton compiler compiles the source into the GPU kernel that
\sys invokes.

The two proofs meet at $f_K$.  The engine proof establishes the logical input
$x_r$ and uses $f_K(x_r)$ without interpreting its arithmetic; the kernel
proof establishes that the deployed GPU invocation returns that result.
Verus checks each call's preconditions and assumes its postconditions through
a trusted import. It does not recheck the kernel proofs; contract translation
and tensor-representation correspondence remain trusted (\autoref{sec:seam}).
Composing the proofs yields the main invariant in \autoref{sec:engine-invariants}.
\autoref{sec:system} describes the engine implementation and proof, while \autoref{sec:kernel} describes the kernel verifier.

\section{Engine Implementation and Verification}
\label{sec:system}

We design our engine and structure our proofs around the intermediate goal of demonstrating that each engine step produces logits determined solely by the token history for any request.
To build up to this, we refer to an engine step as $E\xrightarrow{O}E'$, with the finite partial output map $O$ defined in \autoref{sec:serving-spec}, and write $o_u=O(u)$ with components $o_u.\mathsf{logits}$ and $o_u.\mathsf{token}$.
For any engine state $E$, we let $E.\mathsf{requests}$ map live request identifiers to records, and write $r_u=E.\mathsf{requests}[u]$.
Given this, we want to show that for any engine step $E\xrightarrow{O}E'$, all request IDs $u\in\operatorname{dom}(O)$ must have their logits ($o_u.\mathsf{logits}$) equal to the result of a deterministic function $\operatorname{Forward}$, which depends only on the configuration ($W$) and the history of the request ($r_u.\mathsf{history}$ using the pre-step state $E$, which is equal to the prompt followed by emitted tokens).
$\operatorname{Forward}$ outputs the last-position logits based on a \emph{cold forward} (i.e., from an empty KV-cache).

To prove this inductively, we need to introduce several invariants whose conjunction forms $\operatorname{Inv}(E,W)$.
Therefore, our goal is to prove the following theorem:

\begin{theorem}[Engine-step preservation]
\label{thm:engine-step}
Under the fixed deployment assumptions and imported kernel contracts,
\[
\begin{gathered}
\operatorname{Inv}(E,W)\land E\xrightarrow{O}E'
\\[2pt]
\Longrightarrow\quad
\left[\begin{aligned}
&\operatorname{Inv}(E',W)\\
&{}\land\forall u\in\operatorname{dom}(O),\\
&\quad o_u.\mathsf{logits}
 =\operatorname{Forward}(W,r_u.\mathsf{history})
\end{aligned}\right].
\end{gathered}
\]
\end{theorem}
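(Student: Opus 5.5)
We prove the theorem by case analysis on the kind of engine step, with the invariant $\operatorname{Inv}(E,W)$ split into its four conjuncts: a \emph{request} invariant (each live request's history equals its prompt followed by its emitted tokens, and its computed-prefix length is at most the history length), a \emph{layout} invariant (each request's page table maps logical positions $0,\dots,n-1$ of its computed prefix to allocated physical slots, with reference counts consistent with sharing), a \emph{write-isolation} invariant (no physical slot reachable from two distinct logical prefixes is ever written, so shared pages are read-only), and the central \emph{KV-cache} invariant. The KV-cache invariant states that for every live or retained reusable slot holding position $j$ of logical prefix $h[0..j]$, its contents equal $\operatorname{KV}_W(h[0..j])$, the KV value obtained by a cold forward over that prefix.

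The steps are admission (with prefix-cache lookup), eviction/recomputation, scheduling without compute, and the model-execution step that batches prefill chunks and decode tokens. Non-compute steps emit $O=\emptyset$, so only $\operatorname{Inv}(E',W)$ must be shown. For a prefix hit, the admitted request's page table is pointed at retained slots whose keys are its own logical prefix, so the KV-cache invariant transfers directly. For eviction, dropping slots only shrinks the quantified set. Allocation of fresh pages preserves layout and isolation by freshness.

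The compute step carries the substance. Fix a scheduled request $u$ processing positions $p,\dots,q-1$ of $r_u.\mathsf{history}$. We proceed layer by layer, modelling each kernel $K$ by its uninterpreted $f_K$. Using the layout and KV-cache invariants, the logical input $x_u$ that the imported contract of $K$ presents for $u$ consists of the token embeddings at positions $p..q-1$ together with the cached KV entries $\operatorname{KV}_W(h[0..j])$ for $j<p$. This logical input is independent of batch companions, chunk boundaries, and physical page placement. By the contract, the selected output equals $f_K(x_u)$.

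An inner induction over layers, and over positions within the chunk, shows two things. First, the freshly written KV entries equal $\operatorname{KV}_W(h[0..j])$ for $p\le j<q$, which re-establishes the KV-cache invariant on newly written slots; write isolation guarantees that no other request's slots were clobbered. Second, the last-position logits equal the cold-forward value. The key lemma here is \emph{chunk decomposition}: $\operatorname{Forward}(W,h)$ is definitionally the fold of the same $f_K$ applications over any chunking of $h$, provided the cached prefix values agree. We prove this by induction on the number of chunks, by unfolding the definition of $\operatorname{Forward}$ in terms of the $f_K$.

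When $q=|r_u.\mathsf{history}|$, the step emits $o_u$ with logits $\operatorname{Forward}(W,r_u.\mathsf{history})$. The appended sampled token then updates the history in $E'$ consistently with the request invariant.

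The main obstacle is this chunk-decomposition lemma, together with its interaction with physical indirection. We must show that per-request selection through page tables in the batched attention contract yields exactly the same logical sequence of cached entries as the cold forward. This requires that positions $j<p$ are precisely the computed prefix, which the layout invariant guarantees; otherwise a stale or shared slot could feed a different value. I would first state $\operatorname{Forward}$ recursively as \emph{extend from cached state}, then prove equivalence to the cold definition once, so that the step case only needs the single-chunk extension property.
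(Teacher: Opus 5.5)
Your overall structure matches the paper's. Both proofs use the same invariants, attach exactly matched cached pages, argue layer by layer and position by position with a continuation lemma, rely on write isolation for other requests, and emit from the row of the last pre-step position. The gap is in the lemma you identify as the crux. You claim chunk decomposition holds definitionally, by unfolding $\operatorname{Forward}$ into the $f_K$. That cannot go through when the $f_K$ are uninterpreted and applied to your chunk-level input $x_u$ (embeddings at positions $p,\dots,q-1$ plus cached KV for $j<p$). A cold forward, a chunked prefill, and a decode step apply attention to arguments with different query lengths, key lengths, and selected row indices. Nothing in the definitions relates those applications. For the same reason, $x_u$ is not independent of chunk boundaries, since it depends on $p$ and $q$. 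What is chunk-independent is the selected per-position result, and only because of an imported kernel guarantee. This is the causal contract: the output row at logical position $p$ depends only on the query at $p$ and on $K,V$ through $p$, whatever $q_{\mathit{len}}$, $k_{\mathit{len}}$, and the row index are. It must be combined with the row-wise batch-invariance contracts for per-token kernels. This is the property the paper imports from the kernel layer ("processing later tokens does not change" the canonical value), and its continuation lemma relies on it. Your claim that entries written at $p\le j<q$ equal $\mathrm{KV}_W(h[0..j])$ needs it too, because those entries come from a longer invocation, not from a cold forward on $h[0..j]$. To fix this, state the contracts at per-position granularity, or import the causal contract as an axiom relating $f_K$ across shapes. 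Chunk decomposition then becomes an induction over layers and positions showing that the KV prefixes agree, not an unfolding.

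Your cache cases also skip obligations the paper needs. The runtime matcher certifies a hit through stored token blocks, depths, and parent links, not through ghost prefixes. Concluding that the attached slots hold the request's own prefix therefore requires an invariant reconstructing each resident chain's tokens from those links, which is the coverage condition on $E.\mathsf{prefixes}$. It also requires evicting only unreferenced leaves. Otherwise a freed and recycled parent could let a match attach canonical values for a different prefix, so ``eviction only shrinks the quantified set'' is not enough. Your case list also omits two steps. The first is publishing newly computed full pages as retained records, which must happen only after their values are written and must stay invisible to hits resolved in the same step. The second is removing finished requests while keeping those obligations. Finally, your write-isolation conjunct as stated only forbids writes to slots reachable from two prefixes. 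It must also exclude other requests' private initialized slots and retained pages, and it must require mutually disjoint write slots within a batch.
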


To prove our engine meets Definition~\ref{def:system-determinism}, we subsequently need to account for the token sampled from the logits (which $\operatorname{Forward}$ says nothing about).
Sampling is a deterministic function of logits and explicit per-request
sampler state.
Two requests with the same prompt and initial sampler state begin with equal histories and sampler states; whenever both reach their next output, \autoref{thm:engine-step} gives equal logits; in turn, sampling gives equal tokens and successor sampler states, so appending those tokens preserves history equality.
Steps without an output do not advance this per-request sequence.
Verus checks the induction over finite executions: corresponding
requests agree at every output index reached by both, regardless of their
interleaving with other requests.
Their views are therefore prefix-comparable.

In the rest of the section, we will first discuss three design choices that support the verification of determinism, followed by the invariants that form $\operatorname{Inv}(E,W)$ and how they are preserved.

\subsection{Engine Implementation}
\label{sec:engine-implementation}

\sys follows vLLM's architecture~\cite{kwon2023pagedattention}, with
continuous batching, paged KV storage, and prefix caching. We implement the engine in Rust with Verus
annotations and invoke Triton kernels through Python. Three implementation
choices support determinism and its
verification.

\parab{Fixed kernel choices.}
At initialization, the engine binds the model to an immutable runtime
configuration containing the selected kernel specializations and tile parameters. These
choices depend on the fixed model and deployment configuration, not the
current batch size, prefill chunk size, or whether a request is prefilling or
decoding.  The inference path performs no runtime autotuning and does not fall
back to configurations outside the verified set.  \autoref{sec:kernel}
describes the kernels and their numerical guarantees.

\parab{Exact prefix matching.}
A cached page is reusable only for the token prefix that produced it.  The
cache manager uses rolling hashes to find candidate pages, then checks their
exact tokens, depth in the prefix, and physical parent page.  A hash collision
therefore cannot justify reuse.  Under memory pressure, the manager evicts
only unreferenced leaves of this parent chain, so a parent identifier cannot
be recycled while a retained child still refers to it.

\parab{Separate cache reuse from generation.}
The scheduler resolves all prefix-cache hits before registering any pages
that the current step will produce.  Those pages can be reused only in later
steps, after their KV values have been computed.  Previously computed shared
or registered prefix pages remain immutable; new KV values are written to
private suffix slots.  This discipline separates the cache contents a step
may reuse from the values it must compute.

These choices do not require identical scheduling decisions across
executions.  Batch composition, scheduling order, and physical page allocation
may vary. The proof must establish that these choices do not change a
request's logits.

\begin{figure}[t]
  \centering
\begin{tikzpicture}[
  x=1cm, y=1cm,
  font=\sffamily\fontsize{8.05}{9.5}\selectfont,
  line width=0.45pt,
  heading/.style={font=\sffamily\bfseries\fontsize{8.05}{9.5}\selectfont},
  note/.style={font=\sffamily\fontsize{8.05}{9.5}\selectfont, align=center},
  page/.style={draw=black!55, rounded corners=2pt, align=center,
    text width=1.40cm, minimum height=0.74cm, inner sep=2pt},
  prefixpage/.style={page, draw=blue!55!black, fill=blue!10},
  branch/.style={page, draw=violet!55!black, fill=violet!10},
  other/.style={page, draw=black!50, fill=black!5},
  parent/.style={-{Stealth[open,length=1.5mm,width=1.2mm]}, draw=black!55},
  prefill/.style={-{Stealth[open,length=1.5mm,width=1.2mm]},
    draw=blue!60!black, line width=0.65pt}
]
  \useasboundingbox (0,-3.59) rectangle (8.35,6.42);

  \foreach \x in {0,4.30} {
    \foreach \y in {0,3.30} {
      \draw[draw=black!20, rounded corners=3pt]
        (\x,\y) rectangle (\x+4.05,\y+3.12);
    }
  }

  \begin{scope}[yshift=3.30cm]
    \node[heading, anchor=west] at (0.16,2.87) {1\quad Share};
    \node[note] at (2.025,2.49) {A, B live};
    \node[prefixpage] (sharep) at (2.025,1.89)
      {$P_{\mathsf{7}}:\ [a,b]$\\refs = 2};
    \node[branch] (sharea) at (1.02,0.76)
      {$P_{\mathsf{2}}:\ [c,d]$\\refs = 1};
    \node[other] (shareb) at (3.03,0.76)
      {$P_{\mathsf{9}}:\ [e,f]$\\refs = 1};
    \draw[parent] (sharea.north) -- (sharep.south west);
    \draw[parent] (shareb.north) -- (sharep.south east);
    \node[note] at (1.02,0.16) {A};
    \node[note] at (3.03,0.16) {B};
  \end{scope}

  \begin{scope}[xshift=4.30cm,yshift=3.30cm]
    \node[heading, anchor=west] at (0.16,2.87) {2\quad Retain};
    \node[note] at (2.025,2.49) {A, B finished};
    \node[prefixpage] (retainp) at (2.025,1.89)
      {$P_{\mathsf{7}}:\ [a,b]$\\refs = 0};
    \node[branch] (retaina) at (1.02,0.76)
      {$P_{\mathsf{2}}:\ [c,d]$\\refs = 0};
    \node[other] (retainb) at (3.03,0.76)
      {$P_{\mathsf{9}}:\ [e,f]$\\refs = 0};
    \draw[parent] (retaina.north) -- (retainp.south west);
    \draw[parent] (retainb.north) -- (retainp.south east);
  \end{scope}

  \begin{scope}
    \node[heading, anchor=west] at (0.16,2.87) {3\quad Evict};
    \node[prefixpage] (evictp) at (2.025,1.89)
      {$P_{\mathsf{7}}:\ [a,b]$\\refs = 0};
    \node[page, dashed, draw=red!65!black, text=red!65!black]
      (evicta) at (1.02,0.76) {$P_{\mathsf{2}}$\\free};
    \node[other] (evictb) at (3.03,0.76)
      {$P_{\mathsf{9}}:\ [e,f]$\\refs = 0};
    \draw[parent] (evictb.north) -- (evictp.south east);
  \end{scope}

  \begin{scope}[xshift=4.30cm]
    \node[heading, anchor=west] at (0.16,2.87) {4\quad Reuse + recompute};
    \node[note] at (2.025,2.49) {C: $[a,b,c,d]$};
    \node[prefixpage] (reusep) at (2.025,1.89)
      {$P_{\mathsf{7}}:\ [a,b]$\\refs = 1};
    \node[branch, line width=0.85pt] (reusea) at (1.02,0.76)
      {$P_{\mathsf{2}}:\ [c,d]$\\refs = 1};
    \node[other] (reuseb) at (3.03,0.76)
      {$P_{\mathsf{9}}:\ [e,f]$\\refs = 0};
    \draw[parent] (reusea.north) -- (reusep.south west);
    \draw[parent] (reuseb.north) -- (reusep.south east);
    \node[note] at (1.02,0.16) {C};
  \end{scope}

  \draw[draw=blue!25!black, fill=blue!2, rounded corners=3pt]
    (0,-2.70) rectangle (8.35,-0.22);
  \node[note] at (2.08,-0.49) {Initial KV state};
  \node[note] at (6.25,-0.49) {Final KV state};
  \node[note, anchor=west] at (0.16,-1.10) {Cold};
  \node[note, anchor=west] at (0.16,-1.90) {Reuse\\$P_{\mathsf{7}}$};
  \node[draw=black!50, rounded corners=2pt,
    minimum width=0.66cm, minimum height=0.44cm, inner sep=0pt]
    at (2.08,-1.10) {$\varnothing$};

  \foreach \x/\t in {1.30/a,2.10/b} {
    \draw[draw=blue!55!black, fill=blue!10]
      (\x,-2.12) rectangle (\x+0.80,-1.68);
    \node[note] at (\x+0.40,-1.90) {$\mathrm{KV}_{\t}$};
  }
  \foreach \y in {-1.10,-1.90} {
    \foreach \x/\t in {4.65/a,5.45/b} {
      \draw[draw=blue!55!black, fill=blue!10]
        (\x,\y-0.22) rectangle (\x+0.80,\y+0.22);
      \node[note] at (\x+0.40,\y) {$\mathrm{KV}_{\t}$};
    }
    \foreach \x/\t in {6.25/c,7.05/d} {
      \draw[draw=violet!55!black, fill=violet!10]
        (\x,\y-0.22) rectangle (\x+0.80,\y+0.22);
      \node[note] at (\x+0.40,\y) {$\mathrm{KV}_{\t}$};
    }
    \draw[prefill] (3.05,\y) -- (4.45,\y);
  }
  \node[note] at (3.75,-0.80) {$[a,b,c,d]$};
  \node[note] at (3.75,-1.60) {$[c,d]$};
  \draw[draw=violet!55!black]
    (7.94,-1.10) -- (8.10,-1.10) -- (8.10,-1.90) -- (7.94,-1.90);
  \node[note, text=violet!55!black] at (4.175,-2.43)
    {Final KV states: bitwise equal at every layer};

  \draw[parent] (0.12,-3.04) -- (0.60,-3.04);
  \node[note, anchor=west, inner sep=0pt] at (0.74,-3.04) {Stored parent};
  \draw[prefill] (3.05,-3.04) -- (3.53,-3.04);
  \node[note, anchor=west, inner sep=0pt] at (3.67,-3.04) {Prefill};
  \node[note, anchor=east, inner sep=0pt] at (8.25,-3.04) {refs: live references};

  \draw[draw=blue!55!black, fill=blue!10] (0.12,-3.51) rectangle (0.36,-3.27);
  \node[note, anchor=west, inner sep=0pt] at (0.48,-3.39) {Prefix};
  \draw[draw=violet!55!black, fill=violet!10] (1.72,-3.51) rectangle (1.96,-3.27);
  \node[note, anchor=west, inner sep=0pt] at (2.08,-3.39) {Suffix};
  \draw[draw=black!50, fill=black!5] (3.35,-3.51) rectangle (3.59,-3.27);
  \node[note, anchor=west, inner sep=0pt] at (3.71,-3.39) {Other branch};
  \node[note, anchor=east, inner sep=0pt] at (8.25,-3.39) {Other pages omitted};
\end{tikzpicture}
  \caption{The KV-cache invariant across sharing, retention, eviction, and
  recomputation: every initialized live or retained entry equals the KV value from a cold
  forward on its corresponding causal prefix. $P_i$ denotes physical KV-cache
  page $i$; each page holds KV values for two tokens in this example.}
  \Description{Four snapshots in a two-by-two grid, read left to right and
  then top to bottom, show registered two-token prompt pages. Requests
  A and B share page P7 for tokens a and b, with separate suffix pages P2 for c
  and d and P9 for e and f. After both requests finish, all three pages remain
  cached with zero live references. Memory pressure evicts leaf P2, but P7
  cannot be evicted while retained child P9 names it as a parent. Request C
  then reuses P7 and recomputes c and d in a new allocation of P2. Sharing and
  retention preserve KV equality; eviction removes the obligation for P2 while
  preserving it for P7 and P9; recomputation re-establishes it for P2. Two
  aligned rows compare initial and final logical KV states, with both prefill
  arrows pointing left to right. The cold row starts with an empty cache and
  prefills a, b, c, d. The reuse row starts with cached KV for a and b in P7
  and prefills c and d. Each row ends with four cells for a, b, c, d;
  a bracket and a statement beneath both rows mark the final states as bitwise
  equal at every layer. These cells
  show KV values, not the logits returned by prefill. Under the same model and deployment,
  these values are bitwise equal at every layer, regardless of physical page
  allocation. Blue cells identify the prefix and purple cells the suffix.
  P7 and P2 hold these values in snapshots 1, 2, and 4; there is no content
  obligation for free P2 in snapshot 3. Gray arrows point to stored parents;
  blue arrows show prefill transitions.
  Private tails and unrelated pages are omitted.}
  \label{fig:cache-lifecycle}
\end{figure}

\subsection{Engine Invariants}
\label{sec:engine-invariants}

To establish request determinism across batching, scheduling, and cache reuse,
we maintain a central KV-cache invariant alongside request, layout, and
write-isolation invariants. The KV-cache invariant ensures that every live or
retained reusable KV entry equals the value obtained by recomputing its token
prefix without cache.
Our mechanized proof uses ghost state representing logical request histories
and cache provenance; this state does not affect runtime.

\parab{Logical values and notation.}
$\operatorname{Forward}(W,h)$, for a non-empty request history $h$, composes the fixed single-request uninterpreted kernel functions
$f_K$ introduced in \autoref{sec:overview}.
For $0\leq p<|h|$,
define $f_{\mathsf{KV}}(W,h,p)$ as the collection of KV pairs across all decoder layers
produced at position $p$ by a cold forward on the first $p+1$ tokens of $h$.
We call this the \emph{canonical} KV value. It depends only on the causal
prefix ending at $p$; the kernel contracts establish that processing later
tokens does not change this value.
The fields $r_u.\mathsf{sampler}$ and $r_u.\mathsf{cached}$ hold the request's
current sampler state and initialized-cache length, respectively.
The lookup $r_u.\mathsf{slot}(p)$ maps logical
position $p$ to a physical slot through its page table, and
$E.\mathsf{kv}[s]$ denotes the collection of KV pairs across all layers at
slot $s$. This notation groups values logically, without requiring contiguous
storage across layers. Equality of KV values means bitwise equality at every
layer; logit equality is also bitwise.

\parab{Request and layout consistency.}
Queues and request maps agree on live identifiers, and request records track
their histories, sampler states, and stopping states consistently. Initialized
cache lengths satisfy $0\leq r_u.\mathsf{cached}\leq |r_u.\mathsf{history}|$,
and page tables map these
prefixes to allocated, in-range slots. Tensor shapes, permissions, capacity,
and the fixed runtime configuration satisfy the engine's readiness conditions.
Planning establishes the corresponding kernel-call premises, including the
mapping from each packed row to its request and logical token position.

\parab{KV-value consistency.}
Let $E.\mathsf{prefixes}$ collect initialized live-request prefixes (including private tails) and retained reusable prefixes. Each prefix record
$q$ has a token sequence $q.\mathsf{tokens}$ and an equally long sequence
$q.\mathsf{slots}$ of physical slots in token order.
These records are ghost views of request and page metadata.

Write $h[0:c]$ for the first $c$ tokens of a history $h$. For every live
request identifier $u$, the collection contains a record $q_u$ satisfying
\[
\begin{aligned}
q_u.\mathsf{tokens}
&=r_u.\mathsf{history}[0:r_u.\mathsf{cached}],\\
q_u.\mathsf{slots}[p]&=r_u.\mathsf{slot}(p)
\quad(0\leq p<r_u.\mathsf{cached}).
\end{aligned}
\]
The collection also contains a record for every resident chain with published
prefix provenance: its tokens and slots are reconstructed from exact token
blocks, logical depths, and parent links. This coverage includes chains with
no live references and those not currently selected by the hash index.

For every $q\in E.\mathsf{prefixes}$ and $0\leq p<|q.\mathsf{tokens}|$,
the KV-value invariant requires
\[
E.\mathsf{kv}[q.\mathsf{slots}[p]]
=f_{\mathsf{KV}}(W,q.\mathsf{tokens},p).
\]
The coverage conditions and this equality are part of $\operatorname{Inv}$.
Together they give canonical values for every initialized request position
and every retained prefix, independently of whether its creating request is
still live. Uninitialized slots are not readable and carry no KV-value
obligation.

\autoref{fig:cache-lifecycle} illustrates this persistence. Requests A and B
share $P_7$ and retain separate suffix pages $P_2$ and $P_9$. Their KV-value
obligations survive request completion. Evicting leaf $P_2$ removes its
obligation without affecting $P_7$ or $P_9$; $P_7$ cannot be evicted while
$P_9$ refers to it. A later request reuses $P_7$ and recomputes the suffix,
re-establishing the same canonical values in newly allocated storage.

\parab{Sharing and write isolation.}
Shared and registered prefix pages are immutable, and writable suffix slots
are private. Each execution plan assigns in-range, mutually disjoint write
slots that do not overlap other requests' initialized prefixes or retained
values. Consequently, extending one request's cache preserves the values
available to other requests.

\subsection{Invariant Preservation}
\label{sec:engine-preservation}

\parab{Initialization and admission.}
Initialization establishes $\operatorname{Inv}$: request records contain their
initial histories and sampler states. Initialized-cache lengths are zero and
no prefixes are retained, so the KV-value invariant holds vacuously. Admission
adds a request in this state at a stable engine boundary, preserving existing
cache values and the invariant.

\parab{Planning and forward execution.}
Planning may combine cold prefill, cached partial prefill, and decode work.
It resolves prefix hits, allocates private suffix storage, and evicts only
unreferenced leaves. Row mappings and write-slot checks establish the
kernel-contract premises (\autoref{sec:kernel}).

For a cache hit, let $q\in E.\mathsf{prefixes}$ represent the retained chain
being reused, and let $c=|q.\mathsf{tokens}|$. Exact prefix matching establishes
\[
r_u.\mathsf{history}[0:c]=q.\mathsf{tokens}.
\]
Let $E^+$ denote the intermediate state after attaching the prefix, and let
$r^+=E^+.\mathsf{requests}[u]$. Attachment preserves the request's history
and the reused values, and maps each position $p<c$ to $q.\mathsf{slots}[p]$.
Consequently, for every $0\leq p<c$,
\[
\begin{aligned}
E^+.\mathsf{kv}[r^+.\mathsf{slot}(p)]
&=E.\mathsf{kv}[q.\mathsf{slots}[p]]\\
&=f_{\mathsf{KV}}(W,q.\mathsf{tokens},p)\\
&=f_{\mathsf{KV}}(W,r^+.\mathsf{history},p).
\end{aligned}
\]
The KV-value invariant gives the second equality; dependence only on the
causal prefix gives the third. Thus, adding the reused prefix to the request's
initialized cache preserves consistency, even if no earlier request remains
live.

The forward proof proceeds by decoder layer and scheduled token position.
The kernel contracts establish cold-forward equality for cold prefill.
For cached work, the KV invariant supplies a canonical prefix, and a
continuation lemma establishes that every processed position yields the same logits and new KV values that a cold
forward would. Layout mappings connect these logical
values to packed tensors and physical cache slots. When a request emits an
output, its selected row corresponds to the last position of its pre-step
history, yielding the output equality in \autoref{thm:engine-step}.

\parab{State updates.}
Newly written KV values are canonical; write isolation preserves the values
of other requests and retained pages. Extending a request's initialized prefix
extends its record in the prefix collection. Publishing a newly computed full
page also establishes a retained-prefix record for these canonical values,
so their KV-value obligation survives the creating request's completion.
Leaf-only eviction preserves the obligations of surviving pages.
After sampling for request $u$, the engine appends the emitted token, updates
its sampler state and cache length, then adjusts queues and reference
counts. Finished requests may be removed without discarding retained pages'
KV-value obligations. Steps without an output for a request preserve its
history and sampler state but may extend its initialized cache. Together,
these updates re-establish $\operatorname{Inv}$ at the next engine-step boundary.

\section{Kernel Implementation and Verification}
\label{sec:kernel}

Kernel contracts require equal logical inputs and guarantee bitwise-equal
selected outputs despite changes in batch size, query length, or KV
cache hits. Equal inputs alone do not ensure this property: changes in the
operation sequence can affect floating-point rounding. Our verifier establishes
two facts: corresponding computations apply matching value-relevant operations
in the same order, and those operations receive equal inputs.
We develop the structural analysis using matmul, then extend it with value
analysis for attention causality, under the verification assumptions stated below.

\subsection{Relational Kernel Contracts}
\label{sec:kernel-annotations}
\label{sec:seam}

Each contract relates input and output regions of two kernel invocations. Regions select tensor cells by
logical indices; specified equality is bitwise and other inputs need not agree.

For example, consider matmul with $M\times K$ activations $a$, $K\times N$
weights $b$, and $M\times N$ output $c$. Each row in $a$ represents one
token; a request may contribute multiple rows. The contract compares any
selected token row with its singleton run, requiring equal inputs
for that row and the weights. All
other token rows remain unconstrained.

Kernel annotations are comments and do not affect execution.
\texttt{@params} declares tensor shapes, strides, element kinds, and typed
scalars; \texttt{@grid} declares the launch grid; and \texttt{@verif} states a
relation between invocations.
The verifier treats \texttt{@params} and \texttt{@grid} as the invocation model
and \texttt{same} and \texttt{pre} as premises, and proves the output equality
specified by \texttt{post}.
\autoref{fig:matmul-analysis} combines the annotation, tile IR, and dependency
analysis, with callouts identifying the obligations checked by Z3.
We abbreviate strides and use mathematical notation. With fixed tile
dimensions $B_M,B_N,B_K$, ceiling division covers the output with
$B_M\times B_N$ tiles.

Here, subscripts $L$ and $R$ denote the packed-token and singleton runs.
The free integer variable $x$ is automatically
universally quantified: the contract holds for every selected row satisfying
$0\le x<M_L$ whenever the remaining premises hold.
Both invocations share weight dimensions and tile constants.
This row-level guarantee supports request-level batch invariance
and independence from how tokens are grouped into matmul invocations.

The engine proof establishes kernel-call preconditions and imports kernel
contracts as trusted assumptions in Verus. Contract translation and the
correspondence between tensor representations remain trusted.

\subsection{Relational Structural Analysis}
\label{sec:z3-proof}

\begin{figure}[t]
\centering
\begin{tikzpicture}[
  x=1cm, y=1cm,
  font=\sffamily\fontsize{8.05}{9.5}\selectfont,
  line width=0.45pt,
  heading/.style={font=\sffamily\bfseries\fontsize{8.05}{9.5}\selectfont},
  code/.style={anchor=west, inner sep=0pt,
    font=\sffamily\fontsize{8.05}{9.5}\selectfont},
  note/.style={inner sep=1pt},
  other/.style={draw=black!35, fill=black!5},
  needed/.style={draw=blue!65!black, fill=blue!14},
  badge/.style={circle, draw=blue!65!black, text=blue!65!black,
    fill=white, minimum size=2.8mm, inner sep=0pt, outer sep=0pt,
    line width=0.35pt,
    font=\sffamily\fontsize{8.05}{9.5}\selectfont},
  demand/.style={-{Stealth[open,length=1.5mm,width=1.2mm]},
    draw=blue!65!black, line width=0.6pt},
  solver/.style={draw=black!50, rounded corners=2pt, inner sep=3pt},
  check/.style={-{Stealth[open,length=1.5mm,width=1.2mm]},
    draw=black!65, dashed, line width=0.55pt}
]
  \definecolor{mmDecorator}{HTML}{A02040}
  \definecolor{mmKeyword}{HTML}{A02040}
  \definecolor{mmFunction}{HTML}{A04000}
  \definecolor{mmContract}{HTML}{008000}
  \newcommand{\mmdecorator}[1]{\textcolor{mmDecorator}{\texttt{#1}}}
  \newcommand{\mmkeyword}[1]{\textcolor{mmKeyword}{\texttt{#1}}}
  \newcommand{\mmfunction}[1]{\textcolor{mmFunction}{\texttt{#1}}}
  \newcommand{\mmcontract}[1]{\textcolor{mmContract}{\texttt{#1}}}
  \useasboundingbox (0,-0.93) rectangle (8.25,13.06);

  \begin{scope}[yshift=-0.95cm]
  \node[note, anchor=west] at (0,13.80)
    {Kernel runs: $L$: arbitrary $M>0$; $R$: $M=1$.};
  \node[note, anchor=west] at (0,13.37)
    {Compare row $x$ ($0\le x<M$) in $L$ with row $0$ in $R$.};

  \draw[black!20, rounded corners=2pt] (0,8.70) rectangle (8.25,13.14);
  \node[code] at (0.18,12.88) {\mmdecorator{@params}\texttt{(}};
  \node[code] at (0.48,12.48)
    {\mmfunction{tensor}\texttt{(}$a$\texttt{, }\mmfunction{float}\texttt{, }%
     \mmfunction{shape}\texttt{(}$M,K$\texttt{), }\mmfunction{strides}\texttt{(...)),}};
  \node[code] at (0.48,12.08)
    {\mmfunction{tensor}\texttt{(}$b$\texttt{, }\mmfunction{float}\texttt{, }%
     \mmfunction{shape}\texttt{(}$K,N$\texttt{), }\mmfunction{strides}\texttt{(...)),}};
  \node[code] at (0.48,11.68)
    {\mmfunction{tensor}\texttt{(}$c$\texttt{, }\mmfunction{float}\texttt{, }%
     \mmfunction{shape}\texttt{(}$M,N$\texttt{), }\mmfunction{strides}\texttt{(...)))}};
  \node[code] at (0.18,11.28)
    {\mmdecorator{@grid}\texttt{(}$\lceil M/B_M\rceil,\ \lceil N/B_N\rceil$\texttt{)}};
  \node[code] at (0.18,10.88) {\mmdecorator{@verif}\texttt{(batch\_invariance,}};
  \node[code] at (0.48,10.48) {\mmcontract{same}\texttt{(}$N,K$\texttt{),}};
  \node[code] at (0.48,10.08)
    {\mmcontract{pre}\texttt{(}$0\le x<M_L,\ M_R=1$,};
  \node[code] at (0.88,9.68)
    {$a_L[x:x+1,\ 0:K]=a_R[0:1,\ 0:K]$,};
  \node[code] at (0.88,9.28)
    {$b_L[0:K,\ 0:N]=b_R[0:K,\ 0:N]$\texttt{),}};
  \node[code] at (0.48,8.88)
    {\mmcontract{post}\texttt{(}$c_L[x:x+1,\ 0:N]=c_R[0:1,\ 0:N]$\texttt{))}};
  \end{scope}

  \begin{scope}[yshift=-0.38cm]
  \draw[black!20, rounded corners=2pt] (0,3.64) rectangle (8.25,8.00);
  \draw[solver, dashed] (4.85,5.40) rectangle (8.12,7.92);
  \node[heading, anchor=west, text=black!70] at (4.95,5.63)
    {Z3: pairing and bounds};
  \node[code] at (0.18,7.74)
    {\mmkeyword{parfor}\texttt{\space}$i$\texttt{\space}\mmkeyword{in}\texttt{\space}\mmfunction{range}\texttt{(}$\lceil M/B_M\rceil$\texttt{):}};
  \node[code] at (0.48,7.31)
    {\mmkeyword{parfor}\texttt{\space}$j$\texttt{\space}\mmkeyword{in}\texttt{\space}\mmfunction{range}\texttt{(}$\lceil N/B_N\rceil$\texttt{):}};
  \node[code] at (0.78,6.88) {$m\gets iB_M,\quad n\gets jB_N$};
  \node[code] at (0.78,6.45)
    {$S\gets\mathop{\text{\mmfunction{zeros}}}(B_M,B_N)$};
  \node[code] at (0.78,6.02)
    {\mmkeyword{for}\texttt{\space}$k$\texttt{\space}\mmkeyword{in}\texttt{\space}\mmfunction{range}\texttt{(}$\lceil K/B_K\rceil$\texttt{):}};
  \node[code] at (1.08,5.59) {$t\gets kB_K$};
  \node[code] at (1.08,5.16)
    {$A\gets a[m:m+B_M,\ t:t+B_K]$};
  \node[code] at (1.08,4.73)
    {$B\gets b[t:t+B_K,\ n:n+B_N]$};
  \node[code] at (1.08,4.30)
    {$S\gets\mathop{\text{\mmfunction{mma}}}(A,B,S)$};
  \node[code] at (0.78,3.88)
    {$c[m:m+B_M,\ n:n+B_N]\gets S$};
  \foreach \badgeid/\badgeheight in {1/5.16,2/4.73,3/4.30,4/3.88} {
    \node[badge] at (0.30,\badgeheight) {\badgeid};
  }
  \node[code, text=black!65] at (4.95,7.74)
    {$i_L=\lfloor x/B_M\rfloor\leftrightarrow i_R=0$};
  \node[code, text=black!65] at (4.95,7.31)
    {$j_L=j_R$};
  \node[code, text=black!65] at (4.95,6.02)
    {$k_L=k_R$ in order};
  \end{scope}

  \node[note, anchor=west] at (0,2.94)
    {Rows within tiles: $r_L=x\bmod B_M$; $r_R=0$.};

  \begin{scope}[yshift=0.37cm]
  \foreach \base/\bottom/\label in {2.30/1.08/A,2.30/-0.22/B,4.65/0.245/S} {
    \node[note] at (\base+0.60,\bottom+1.06) {$\label$};
    \foreach \col in {0,1,2} {
      \foreach \row in {0,1,2} {
        \draw[other] (\base+\col*0.40,\bottom+\row*0.27)
          rectangle ({\base+(\col+1)*0.40},{\bottom+(\row+1)*0.27});
      }
    }
  }
  \foreach \col in {0,1,2} {
    \draw[needed] (2.30+\col*0.40,1.35)
      rectangle ({2.30+(\col+1)*0.40},1.62);
    \foreach \row in {0,1,2} {
      \draw[needed] (2.30+\col*0.40,-0.22+\row*0.27)
        rectangle ({2.30+(\col+1)*0.40},{-0.22+(\row+1)*0.27});
    }
    \draw[needed] (4.65+\col*0.40,0.515)
      rectangle ({4.65+(\col+1)*0.40},0.785);
  }
  \node[needed, minimum width=1.80cm, minimum height=0.52cm, inner sep=2pt]
    (aregion) at (0.95,1.485) {$a[m+r,\ldots]$};
  \node[needed, minimum width=1.80cm, minimum height=0.52cm, inner sep=2pt]
    (bregion) at (0.95,0.185) {$b[\ldots,\ldots]$};
  \node[needed, minimum width=1.80cm, minimum height=0.52cm, inner sep=2pt]
    (cregion) at (7.15,0.65) {$c[m+r,\ldots]$};
  \draw[demand] (cregion.west) -- (5.90,0.65);
  \draw[blue!65!black, line width=0.6pt] (4.60,0.65) -- (4.10,0.65);
  \draw[demand] (4.10,0.65) |- (3.55,1.485);
  \draw[demand] (4.10,0.65) |- (3.55,0.185);
  \draw[demand] (2.25,1.485) -- (aregion.east);
  \draw[demand] (2.25,0.185) -- (bregion.east);
  \draw[demand] (5.65,1.10)
    .. controls (6.20,2.15) and (4.30,2.15) .. (4.85,1.10);
  \node[badge] at (2.085,1.83) {1};
  \node[badge] at (2.085,0.53) {2};
  \node[badge] at (4.10,0.99) {3};
  \node[badge] at (5.25,1.89) {3};
  \node[badge] at (6.075,1.02) {4};

  \node[note, align=center, text=black!70] at (7.15,1.83)
    {\textbf{Z3:} stores cover\\\mmcontract{post}};
  \draw[check] (7.15,1.44) -- (cregion.north);
  \draw[solver, dashed] (0,-0.19) rectangle (1.90,1.87);
  \node[solver, minimum width=3.80cm, minimum height=0.48cm]
    (premises) at (2.00,-0.75) {\mmcontract{pre}: input equalities};
  \draw[check] (premises.north -| aregion.center) -- (0.95,-0.19);
  \node[note, anchor=west, align=left, text=black!70] at (4.08,-0.75)
    {\textbf{Z3:} input-region coverage\\and matching load masks};
  \end{scope}
\end{tikzpicture}
\caption{Matmul contract, tile IR, and Z3 checks. Numbers link
statements to backward dependencies. Slices are half-open;
loads are zero-padded and stores bounds-masked. $\operatorname{mma}$
denotes \texttt{tl.dot}; \texttt{parfor} denotes independent grid iterations.}
\label{fig:matmul-analysis}
\Description{The annotation declares tensor shapes, an output-tile grid,
equal input regions, and the required output equality.
The left kernel run has arbitrary positive M; the right kernel
run has M equal to one. Two outer parallel loops enumerate the row and column axes of
the launch grid. Each program initializes an accumulator, loads activation
and weight tiles in a reduction loop, applies a multiply--accumulate update, and stores the
output. The two runs pair the same ordered reduction iterations. The left
program contains token row x and the right program contains singleton row
zero. A backward dependency diagram starts from a region of output tensor c,
reaches the compared accumulator row S, and branches through activation tile
A and weight tile B to the required regions of input tensors a and b.
The compared output, accumulator row, and required inputs are blue.
A self-loop on the accumulator denotes dependence on the same row in its
preceding state. Other rows remain unconstrained. Z3 callouts identify program
pairing and loop-bound checks, coverage of declared outputs by stores, and
coverage of the required activation and weight regions by annotated input
equalities with matching load masks. The prior accumulator row is related by
loop induction rather than by the input annotation.}
\end{figure}

\parab{Tile representation.}
\label{sec:tile-ir}
Our kernels use a restricted subset of Triton, compiled by the
unmodified Triton compiler. The verifier specializes tile constants and
translates block pointers, masked accesses, tile operations, structured loops,
and scalar branches into a typed tile IR. Block-pointer geometry and scalar
metadata identify tensor regions. Unsupported constructs and unresolved
shapes are rejected.

For matmul (\autoref{fig:matmul-analysis}), each program starts with a zero FP32 accumulator of shape
$B_M\times B_N$. It loads activation and weight
tiles, accumulates their products over
$\lceil K/B_K\rceil$ iterations, then stores the
output tile. Tile-size selection depends on fixed $N$ and $K$, not runtime $M$.

\parab{Relational control-flow alignment.}
The verifier seeks a \emph{control-flow isomorphism} pairing program instances
and operations relevant to the selected outputs. Straight-line operations pair in source
order with matching operators, types, static parameters, and operand order.
Branches require equal value-relevant guards and recursively aligned paths.
Reduction loops pair iterations in the same order, requiring equal ranges,
related initial state, and corresponding state updates. Parallel grid loops
(\texttt{parfor} in \autoref{fig:matmul-analysis}) may instead pair only the
independent iterations producing the selected regions.
Z3 checks the scalar, guard, and range obligations under the contract premises.

In matmul, the verifier pairs the program containing left token row $x$ with
the singleton program containing right row zero, for each output-column tile.
Shared $K$ and $B_K$ give equal loop ranges. The zero
initializations, accumulation operations, and stores align,
while the selected rows may have different offsets within their tiles.

\parab{Backward rules.}
Along this correspondence, backward rules determine which operand regions
must agree given output regions. Elementwise operations require corresponding regions; broadcasts
map them to source coordinates; reductions extend them over the reduced axis.
For a matrix-product tile $D=\texttt{tl.dot}(A,B)$, let $I$ and $J$ select
output rows and columns, and let $H$ be the complete contracted index range.
Writing $A[I,H]$ for the subregion selected by these indices, the rule is
$D[I,J]\leftarrow A[I,H],\ B[H,J]$.
The arrow maps a selected output region to its required input regions;
rows of $A$ outside $I$ remain unconstrained. For $\operatorname{mma}$, the
required inputs additionally include the corresponding region of the incoming
accumulator. Z3 checks that annotated input equalities cover
these dependencies, relevant load masks agree, and stores cover the declared
outputs.

Applying these rules to matmul, the selected output row requires the
corresponding final accumulator row. As \autoref{fig:matmul-analysis}
illustrates, each accumulation step requires
its preceding accumulator row, the selected activation-row segment, and the
corresponding weight tile. The annotated input equalities cover those segments
and tiles. Starting from equal zero accumulators, induction over the aligned
iterations gives equal accumulator rows; the paired stores then
give the contract's output equality under the trusted tile-operation rules.

\subsection{Value-Aware Regional Analysis}
\label{sec:pagedattention}

\begin{figure}[t]
\centering
\begin{tikzpicture}[
  x=0.95cm, y=1cm,
  font=\sffamily\fontsize{8.05}{9.5}\selectfont,
  line width=0.45pt,
  heading/.style={font=\sffamily\bfseries\fontsize{8.05}{9.5}\selectfont},
  note/.style={font=\sffamily\fontsize{8.05}{9.5}\selectfont},
  selected/.style={draw=blue!60!black, fill=blue!12},
  shared/.style={draw=violet!55!black, fill=violet!12},
  other/.style={draw=black!35, fill=black!6},
  op/.style={draw=black!55, rounded corners=2pt,
    minimum width=1.05cm, minimum height=0.46cm, inner sep=2pt},
  flow/.style={-{Stealth[open,length=1.5mm,width=1.2mm]},
    draw=black!65, line width=0.55pt},
  equal/.style={draw=blue!60!black, line width=0.6pt}
]
  \useasboundingbox (0,-1.05) rectangle (8.65,5.10);

  \begin{scope}
    \node[note, anchor=west] at (0,4.87)
      {Prefill: $i=3$, $q_{\mathit{len}}=k_{\mathit{len}}=6$};
    \begin{scope}[yshift=0.30cm]
    \node[note, anchor=west] at (0,2.23)
      {Decode: $i=0$, $q_{\mathit{len}}=1$, $k_{\mathit{len}}=4$};

    \foreach \j in {0,1,2,4,5} {
      \draw[other] (0.85+\j*0.58,3.77)
        rectangle (1.43+\j*0.58,4.19);
      \node[note, text=black!55] at (1.14+\j*0.58,3.98) {$q_{\mathsf{\j}}$};
    }
    \draw[selected] (2.59,3.77) rectangle (3.17,4.19);
    \node[note] at (2.88,3.98) {$q_{\mathsf{3}}$};
    \draw[selected] (2.59,1.40) rectangle (3.17,1.82);
    \node[note] at (2.88,1.61) {$q_{\mathsf{3}}$};
    \node[note, anchor=east] at (0.70,3.98) {$q$:};
    \node[note, anchor=east] at (0.70,1.61) {$q$:};

    \node[op] (atleft) at (6.10,3.98) {Attention};
    \node[op] (atright) at (6.10,1.61) {Attention};
    \draw[flow] (4.46,3.98) -- (atleft.west);
    \draw[flow] (3.30,1.61) -- (atright.west);

    \foreach \base/\j in {0.85/0,1.57/1,2.41/2,3.13/3} {
      \foreach \y in {3.12,0.75} {
        \draw[shared] (\base,\y-0.21) rectangle (\base+0.72,\y+0.21);
        \node[note] at (\base+0.36,\y) {$\mathrm{KV}_{\mathsf{\j}}$};
      }
    }
    \foreach \base/\j in {3.97/4,4.69/5} {
      \draw[other] (\base,2.91) rectangle (\base+0.72,3.33);
      \node[note, text=black!55] at (\base+0.36,3.12) {$\mathrm{KV}_{\mathsf{\j}}$};
    }
    \foreach \x/\p in {1.57/7,3.13/2,4.69/9} {
      \node[note, text=black!65] at (\x,3.54) {$P_{\mathsf{\p}}$};
    }
    \foreach \x/\p in {1.57/4,3.13/8} {
      \node[note, text=black!65] at (\x,1.17) {$P_{\mathsf{\p}}$};
    }
    \node[note, anchor=east] at (0.70,3.12) {KV:};
    \node[note, anchor=east] at (0.70,0.75) {KV:};
    \node[note, text=black!60] at (4.69,2.68) {Masked};
    \draw[flow] (5.54,3.12) -| (atleft.south);
    \draw[flow] (3.98,0.75) -| (atright.south);

    \foreach \y/\n in {3.98/atleft,1.61/atright} {
      \draw[flow] (\n.east) -- (7.04,\y);
      \draw[selected] (7.17,\y-0.21) rectangle (7.73,\y+0.21);
      \node[note] at (7.45,\y) {$o_{\mathsf{3}}$};
    }
    \node[note] at (7.45,4.57) {Output};
    \draw[equal] (7.84,3.98) -- (8.04,3.98) -- (8.04,1.61)
      -- (7.84,1.61);
    \node[note, rotate=90, text=blue!60!black] at (8.35,2.79)
      {Bitwise equal};
    \end{scope}
    \node[note, anchor=west] at (0,0.30)
      {$p=k_{\mathit{len}}-q_{\mathit{len}}+i=3$ in both invocations};
  \end{scope}

  \draw[black!20] (0,-0.10) -- (8.65,-0.10);
  \draw[selected] (0,-0.53) rectangle (0.27,-0.28);
  \node[note, anchor=west] at (0.38,-0.405) {Selected input/output};
  \draw[shared] (3.55,-0.53) rectangle (3.82,-0.28);
  \node[note, anchor=west] at (3.93,-0.405) {Equal KV prefix};
  \draw[other] (6.20,-0.53) rectangle (6.47,-0.28);
  \node[note, anchor=west] at (6.58,-0.405) {Other data};
  \draw[flow] (0,-0.86) -- (0.45,-0.86);
  \node[note, anchor=west] at (0.54,-0.86) {Data flow};
  \node[note, anchor=west] at (3.55,-0.86) {$P_j$: physical KV page};
\end{tikzpicture}
\caption{Prefill and decode use equal queries and KV prefixes despite different query indices and physical pages, yielding bitwise-equal outputs. Here, $q_p$, $\mathrm{KV}_p$, and $o_p$ are indexed by logical position. Future KV positions are causally masked.}
\label{fig:implicit-batch}
\Description{Attention compares query index three in a six-token prefill with query index zero in a decode with four key positions. Both select logical position three and have equal query values and KV values through that position. The prefill uses physical KV pages seven and two for this prefix; the decode uses pages four and eight. Future KV positions four and five in the prefill are masked. The contract requires bitwise equality of the selected outputs.}
\end{figure}

Structural analysis alone cannot establish attention causality: it
conservatively retains dependencies on masked KV cells and requires matching
loop ranges. Value analysis can remove these dependencies and establish that
extra key-tile iterations preserve the selected state.

\parab{The causal contract.}
Let $q_{\mathit{len}}$ and $k_{\mathit{len}}$ be an invocation's query and key
lengths, with $0<q_{\mathit{len}}\le k_{\mathit{len}}$. Query row $i$, where
$0\le i<q_{\mathit{len}}$, has logical position
\[
p = k_{\mathit{len}} - q_{\mathit{len}} + i.
\]
The contract requires equal selected queries at equal $p$ and equal logical
KV values through $p$, with shared head geometry, scale, page size, and tile
constants (\autoref{fig:implicit-batch}). Lengths, row indices, and physical
page mappings may differ.

\parab{Forward value analysis.}
The kernel verifier uses conservative abstract interpretation~\cite{cousot1977abstract} to track regions known to
contain zero, one, or negative infinity, Boolean mask values, and unchanged
state. Unknown positions remain unconstrained. This information refines
the backward dependencies: a false mask selects its false branch; masked
scores become negative infinity; subsequent numerical rules identify zeroed tensor regions and restrict the following matrix product's value dependencies.
Rules may carry explicit side conditions, such as finiteness of affected values.

\parab{From value analysis to causal equivalence.}
Shared key-tile iterations align in order, with selected KV dependencies
confined to the causal prefix. Extra iterations have all-false masks for the
selected row, for which the kernel explicitly retains its incoming state.
The value analysis establishes this exact identity transition. We relax
\autoref{sec:z3-proof}'s equal-range requirement only for iterations proved to
preserve the selected state bitwise, extending control-flow alignment
\emph{up to identity steps}. Equal initial states and aligned updates then give equal
outputs without cancellation or reassociation. \autoref{app:attention-verification} develops the attention contract, tile IR,
and relational argument in more detail.

\begin{figure*}[t]
  \centering
  \includegraphics[width=\textwidth]{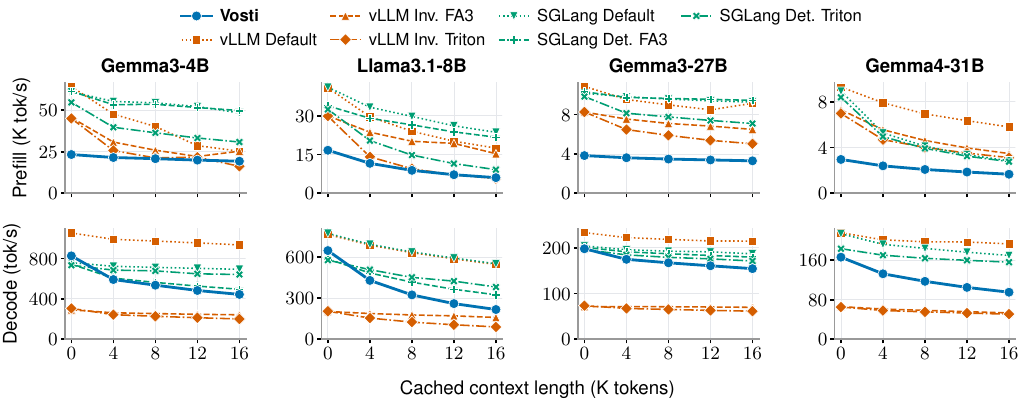}
  \caption{Prefill (top) and decode (bottom) throughput (higher is better)
  versus cached context length.}
  \label{fig:cached-phases}
  \Description{Eight line plots arranged in two rows and four model columns,
  with prefill above decode, comparing seven engine configurations at cached contexts
  of 0, 4096, 8192, 12288, and 16384 tokens. Prefill
  rates are in thousands of tokens per second and decode rates in tokens per
  second. \sys exceeds both vLLM invariant modes in decode throughput at
  every tested context, while prefill remains slower than invariant FA3.}
\end{figure*}

\parab{Verification assumptions.}
The verifier checks IR-level rule applications and relational obligations.
It assumes that compiled Triton operations preserve the stated regional
dependencies, operation correspondences, and numerical transfer rules;
tests support these assumptions on the evaluated configurations.
The translator, solver encoding, structural-to-value argument,
engine/kernel interface, and compiler/GPU execution remain trusted.
Attention assumes input-dependent finiteness, unchecked at runtime.
Full numerical correctness against a reference semantics remains outside our scope.

\section{Evaluation}
\label{sec:eval}

\subsection{Determinism and Numerical Variation}
\label{sec:eval-determinism}

We detail the tests in \autoref{sec:production-measurements},
extend \sys's model coverage, and investigate selected production-engine
mismatches.

\parab{Test coverage.}
Prompts are random token sequences with lengths
chosen near batching, chunking, and attention-window boundaries. Batch tests use
32 prompts of 17--8,193 tokens and batch sizes 2, 4, and 8, yielding 296
comparisons including order variations. Chunk tests
use 18 prompts of 63--32,768 tokens and budgets of 64, 128, 256, 512, and
1,024 tokens against an unchunked reference with a 32,768-token budget, yielding 90
comparisons. Prefill--decode tests start from prompts of 257, 8,192, and
32,768 tokens and compare 128 prediction positions each, yielding 384 pairs.
The 14 prefix-reuse comparisons cover full and partial reuse, divergent
continuations, and generated-prefix reuse. Together, these give 784
comparisons per model and configuration.

Using the criteria of \autoref{sec:production-measurements}, \sys
passes all 784 comparisons for each of seven models: Llama3.1-8B, Llama3.2-3B, Gemma3-4B, Gemma3-12B, Gemma3-27B, Gemma4-12B, and
Gemma4-31B, totaling 5,488 bitwise-matched pairs.

\parab{Metadata-dependent specialization.}
In both vLLM and SGLang, FA3 uses a batch-wide K-length bound to determine
whether local-window masking is needed. This choice also affects kernel
specialization and tiling, so batch composition can change attention outputs. SGLang's decode graphs expose
a related dependence on storage metadata: their wider KV page tables affect
a similar specialization decision, even when the valid keys are unchanged.
These behaviors account for selected batch and prefill--decode differences
in the Gemma3-4B measurements.

\parab{Path-dependent numerical behavior.}
SGLang's Triton prefill and decode paths use different numerical algorithms;
fixing the decode split-KV count does not make their attention outputs bitwise
identical. In vLLM's invariant Triton path, query partitioning changes which
fully masked softmax tiles are visited. Processing such a tile can also replace a
running maximum of $-\infty$ with zero, affecting subsequent rounding. Thus, numerical variation can arise within an attention
implementation as well as from specialization choices.

\parab{Scope.}
Controlled tests support these explanations, which cover some mismatches
in \autoref{tab:production-motivation}. The differences characterize numerical
behavior under tested execution variations and need not indicate engine bugs.

\subsection{Performance}
\label{sec:eval-performance}

\begin{figure*}[t]
  \centering
  \includegraphics[width=\textwidth]{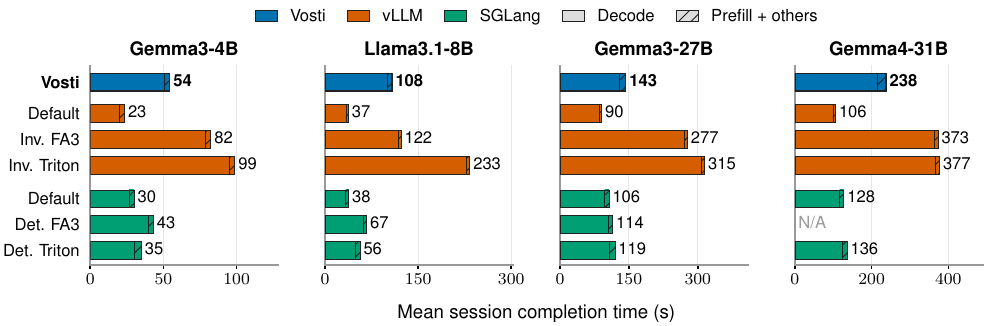}
  \caption{End-to-end session time (lower is better).
  Solid segments show decode time, from first
  output token to response completion; hatched segments combine initial-turn
  and follow-up prefill (request submission to first output token) with
  inter-turn user delay and client/scheduling gaps.
  N/A indicates an unsupported engine configuration.}
  \Description{Four model panels share seven mode rows. The color legend identifies
  \sys in blue, vLLM in orange, and SGLang in green. Each horizontal
  bar shows mean session completion time, with solid decode time followed by
  hatched prefill and other time. Integer totals appear at the bar ends.
  Each model has its own time scale; an unsupported configuration is marked N/A.}
  \label{fig:shared-prefix-sessions}
\end{figure*}

For Gemma4-31B, the vLLM invariant FA3 configuration uses FA3 for local
attention and Triton for global attention because FA3 does not support its
512-dimensional heads. SGLang's deterministic FA3
configuration is unsupported for this model (shown as N/A or omitted).

\parab{Throughput.}
To separate prefill and decode costs, we vary cached context length from 0 to 16K
tokens on a single H200 NVL GPU, holding the batch size at 4
and the per-sequence query/decode length fixed.
\autoref{fig:cached-phases} shows engines' prefill and decode throughput at different cached context lengths. Timing excludes cache warmup.
\sys achieves $1.36$--$3.19\times$ the decode throughput of vLLM's invariant
configurations, while prefill throughput remains lower than that of both engines.

These trends are consistent with the engines' kernel choices.
SGLang benefits from DeepGEMM~\cite{deepgemm2025} and efficient attention,
including FA3 and split-KV in its Triton decode path.
Both \sys and vLLM's invariant modes use Triton matmul,
but vLLM's larger tiles favor prefill, whereas \sys's smaller tiles favor
decode. This trade-off helps explain \sys's lower prefill throughput and
its decode advantage over vLLM's invariant modes despite comparable or
slower attention.

\parab{Session completion time.}
Coding
agents and document-analysis sessions ~\cite{anthropic2025promptcaching, shihipar2026promptcaching} reuse long-lived reference material
or project context across calls, while new messages and tool results extend
the history. With effective cache retention, such agentic workloads
can be decode-dominated~\cite{yuan2026agenticworkloads}.
We capture this pattern with a synthetic LLM-serving trace of four concurrent
sessions of six turns each on an H200 NVL GPU.
Initial-turn prefill processes a 16K-token prompt, reusing an 8K shared
prefix already cached. Follow-up prefill adds 256 input tokens to the
retained conversation. Each turn produces 768 output tokens. Each session issues
its next turn after the preceding response completes and a 0.5\,s inter-turn
user delay. Initial requests are randomly staggered. \autoref{fig:shared-prefix-sessions} reports
mean session completion time, broken down into decode and combined prefill
and other delays.

Decode accounts for 91--93\% of \sys's session time, limiting the
impact of prefill overhead in this decode-heavy workload.
\sys completes sessions $1.14$--$2.20\times$ faster than vLLM's invariant
modes, while remaining slower than the default modes
and SGLang's deterministic modes.

\subsection{Implementation and Verification Effort}
\label{sec:eval-effort}

\autoref{tab:core-effort} reports implementation, specification, and proof
lines of code for \sys's core components, together with verification time. The engine specification comprises two parts. The abstract specification
defines request correspondence, execution traces, and observable output
agreement. Runtime contracts
express the assumptions at external-operation boundaries, including Verus
counterparts of the kernel contracts and assumptions about certain CUDA calls used at runtime.

We measure wall-clock verification time with two AMD EPYC 9355
32-core CPUs (128 hyperthreads). Verus uses 127 workers; the kernel
verifier runs on a single core, with mean times per kernel configuration reported.

\begin{table}[t]
\centering
\footnotesize
\caption{Core implementation and verification effort (lines of code), and
wall-clock verification time of \sys. Kernel times are per configuration.}
\label{tab:core-effort}
\begin{tabular}{lrrrr}
\toprule
\textbf{Component} & \textbf{Impl.} & \textbf{Spec} & \textbf{Proof} &
\textbf{Time} \\
\midrule
Engine and model execution & 14,043 & 876 & 64,952 & 52.9s \\
\midrule
Triton kernels & 3,028 & 500 & 0 & 17.3s \\
\quad Attention (Full \& SWA) & 1,189 & 213 & 0 & 15.8s \\
\quad Linear & 387 & 45 & 0 & 0.4s \\
\quad Normalization & 665 & 99 & 0 & 0.5s \\
\quad Activation \& others & 787 & 143 & 0 & 0.6s \\
\midrule
Kernel verifier & 17,182 & 0 & 0 & - \\
\bottomrule
\end{tabular}
\par\smallskip\noindent\parbox{\linewidth}{\raggedright
\textbf{Engine specification breakdown:} Abstract
specification: 168; runtime contracts: 708 LoC.}
\end{table}

\section{Discussion}
\label{sec:discussion}

\parab{Determinism and functional correctness.}
A consistently incorrect implementation can still be deterministic. As an
extreme example, an engine that always returns all-zero logits and token ID
zero produces identical outputs across executions while failing to implement
the intended model. \sys's verification establishes determinism but not functional
correctness. Our guarantee also fixes the model
and value-relevant deployment configuration. It does not establish bitwise
agreement across numerical dtypes, kernel configurations, compiler versions,
or hardware platforms.

\parab{Cross-kernel relational verification.}
Fixing kernel and launch choices simplifies verification, but a single
implementation may be suboptimal across workload regimes.
DeepSeek-V4 uses two decoding-attention kernels: one assigns a single SM
per sequence for full waves, while the other uses multiple SMs to reduce
latency in partially filled waves. Both preserve the same accumulation
order for bitwise identity~\cite{deepseek2026v4}.
Extending our relational analysis to verify bitwise equivalence across
distinct kernels could enable workload-dependent dispatch while preserving
deterministic outputs.

\parab{Additional model architectures.}
Linear-attention architectures such as Gated DeltaNet summarize token history
in a recurrent state~\cite{yang2025gateddelta}. Supporting them would require extending
our cache invariant to associate retained states with token prefixes and
establishing bitwise agreement across chunked prefill and recurrent decode,
whose mathematically equivalent formulations may group floating-point
operations differently. Mixture-of-experts (MoE) models introduce
data-dependent routing and token regrouping; expert-capacity limits in some
designs can also make computation depend on co-batched
tokens~\cite{fedus2022switch}. Extending \sys to MoE would require
batch-independent routing policies, invariants preserving token identity
through regrouping, and relational contracts for expert computation and
output accumulation.

\parab{Multi-GPU inference.}
\sys currently supports single-GPU execution. Extending determinism to
parallel inference introduces both numerical and distributed-state obligations.
Tensor parallelism requires preserving value-relevant accumulation order
across local kernels and collective reductions, while pipeline execution and
distributed KV storage require tracking request and token provenance across
devices. Our approach could extend through relational contracts for
distributed operations and invariants for communication and state ownership.
An initial target is determinism within a fixed parallel configuration;
bitwise agreement across GPU counts or partitionings requires additional
constraints~\cite{zhang2025tpinvariance}.

\section{Related Work}

\parab{Deterministic LLM inference.}
Thinking Machines Lab introduced batch-invariant kernels, and vLLM, SGLang,
and DeepSeek have brought related modes into production inference
stacks~\cite{he2025defeating, sglang2025deterministic, vllm2025bitwise,
vllm2026batchinvariance, deepseek2026v4}.  Yuan et al. empirically characterize
numerical variation across batch sizes, GPU types, and parallel configurations~\cite{yuan2025numerical},
while Zhang et al. study invariance across tensor-parallel sizes
~\cite{zhang2025tpinvariance}. These efforts control numerical
variation through kernel design and empirical validation. \sys combines engine
invariants with relational kernel contracts to establish per-request determinism
across batching, chunking, prefill--decode transitions, and cache reuse under
its stated assumptions.

LLM-42~\cite{gond2026llm42} uses fixed-shape verification and rollback to
recover deterministic decoding. MarginGate~\cite{chu2026margingate}
selectively verifies low-margin steps and repairs the current KV column,
with thresholds calibrated for sequence-level agreement.
\sys verifies the implementation ahead of execution,
avoiding per-request replay.

\parab{Reproducible floating-point computation.}
The HPC community has developed bitwise-reproducible reductions whose results are
independent of data partitioning or summation order~\cite{demmel2013reproducible}.
\sys targets a fixed deployment, preserving each position's logical inputs
and value-relevant operation sequence across execution variations. This requires
engine invariants alongside kernel-level guarantees.

\parab{KV-cache sharing security.}
HijackKV~\cite{zhang2026hijackkv} shows that position-independent reuse can
carry attacker-controlled context into another request through cached KV
values. This motivates tracking the full causal prefix, as captured by
\sys's cache-provenance invariant.
PromptPeek~\cite{wu2025promptpeek} reconstructs other users' prompts through
serving-order side channels induced by shared prefix caches.
\sys constrains output values across cache reuse; confidentiality of timing
and serving order remains outside its guarantee.

\parab{GPU kernel verification.}
GPUVerify proves race freedom and barrier safety for CUDA and OpenCL
kernels~\cite{betts2012gpuverify}. Volta checks optimized ML kernels against
reference implementations at the PTX level using real-number
arithmetic~\cite{dubey2025volta},
and ProofWright verifies memory, thread, and functional properties of generated
CUDA kernels~\cite{chatterjee2025proofwright}. Product programs reduce relational
verification to verification of a combined program~\cite{barthe2011product}.
\sys relates two invocations
of the same Triton kernel with different packed-row indices, query lengths,
and physical page tables. Its tile-level analysis aligns the selected
computations and establishes equal dependencies. Like translation
validation~\cite{pnueli1998translation} and Alive2~\cite{lopes2021alive2}, it
checks concrete programs' relational obligations; our translator,
cross-language bridge, and compiler remain trusted.

\section{Conclusion}

We presented \sys, an LLM inference system designed and verified against a
system-level specification of determinism. For a fixed model and deployment,
its proof combines engine invariants checked in Verus with relational Triton
kernel contracts. The engine preserves each position's logical inputs across
scheduling and KV-cache reuse, while the kernel verifier establishes equal
selected outputs by aligning value-relevant operations and their dependencies.
Together, they establish per-request bitwise determinism under the stated
assumptions and trusted computing base.

\sys passes all 5,488 bitwise comparisons across seven Llama and Gemma models.
On the evaluated decode-heavy serving workload, it completes sessions
$1.14$--$2.20\times$ faster than vLLM's batch-invariant modes, while
remaining slower than the default configurations and SGLang's deterministic
modes. These results demonstrate how engine invariants and relational kernel
verification can support deterministic inference with practical performance.
\sys's source code is publicly available at \url{https://github.com/QDelta/Vosti}.

\begin{acks}
This work was supported in part by National Science Foundation grants CNS-2625580, CNS-2238665, CNS-2402696, and OAC-2503010, as well as by gifts from Amazon, Google, and Meta. This research was also supported by the
National Science Foundation through the ACCESS program
and AWS through the CloudBank project, which is supported
by National Science Foundation grant 1925001.
\end{acks}

\bibliographystyle{ACM-Reference-Format}
\bibliography{confs_long,ref}


\begin{thebibliography}{37}


\ifx \showCODEN    \undefined \def \showCODEN     #1{\unskip}     \fi
\ifx \showISBNx    \undefined \def \showISBNx     #1{\unskip}     \fi
\ifx \showISBNxiii \undefined \def \showISBNxiii  #1{\unskip}     \fi
\ifx \showISSN     \undefined \def \showISSN      #1{\unskip}     \fi
\ifx \showLCCN     \undefined \def \showLCCN      #1{\unskip}     \fi
\ifx \shownote     \undefined \def \shownote      #1{#1}          \fi
\ifx \showarticletitle \undefined \def \showarticletitle #1{#1}   \fi
\ifx \showURL      \undefined \def \showURL       {\relax}        \fi
\providecommand\bibfield[2]{#2}
\providecommand\bibinfo[2]{#2}
\providecommand\natexlab[1]{#1}
\providecommand\showeprint[2][]{arXiv:#2}

\bibitem[Agrawal et~al\mbox{.}(2024)]%
        {agrawal2024chunkedprefill}
\bibfield{author}{\bibinfo{person}{Amey Agrawal}, \bibinfo{person}{Nitin
  Kedia}, \bibinfo{person}{Ashish Panwar}, \bibinfo{person}{Jayashree Mohan},
  \bibinfo{person}{Nipun Kwatra}, \bibinfo{person}{Bhargav Gulavani},
  \bibinfo{person}{Alexey Tumanov}, {and} \bibinfo{person}{Ramachandran
  Ramjee}.} \bibinfo{year}{2024}\natexlab{}.
\newblock \showarticletitle{{Taming {Throughput-Latency} Tradeoff in {LLM}
  Inference with {Sarathi-Serve}}}. In \bibinfo{booktitle}{\emph{Symposium on
  Operating Systems Design and Implementation (OSDI)}}.
  \bibinfo{pages}{117--134}.
\newblock
\urldef\tempurl%
\url{https://www.usenix.org/conference/osdi24/presentation/agrawal}
\showURL{%
\tempurl}


\bibitem[{Anthropic}(2025)]%
        {anthropic2025promptcaching}
\bibfield{author}{\bibinfo{person}{{Anthropic}}.}
  \bibinfo{year}{2025}\natexlab{}.
\newblock \bibinfo{title}{{Prompt caching with Claude}}.
\newblock \bibinfo{howpublished}{\url{https://claude.com/blog/prompt-caching}}.
\newblock


\bibitem[Barthe et~al\mbox{.}(2011)]%
        {barthe2011product}
\bibfield{author}{\bibinfo{person}{Gilles Barthe}, \bibinfo{person}{Juan~Manuel
  Crespo}, {and} \bibinfo{person}{C{\'e}sar Kunz}.}
  \bibinfo{year}{2011}\natexlab{}.
\newblock \showarticletitle{{Relational Verification Using Product Programs}}.
  In \bibinfo{booktitle}{\emph{International Symposium on Formal Methods
  (FM)}}. \bibinfo{pages}{200--214}.
\newblock
\href{https://doi.org/10.1007/978-3-642-21437-0_17}{doi:\nolinkurl{10.1007/978-3-642-21437-0_17}}


\bibitem[Betts et~al\mbox{.}(2012)]%
        {betts2012gpuverify}
\bibfield{author}{\bibinfo{person}{Adam Betts}, \bibinfo{person}{Nathan Chong},
  \bibinfo{person}{Alastair Donaldson}, \bibinfo{person}{Shaz Qadeer}, {and}
  \bibinfo{person}{Paul Thomson}.} \bibinfo{year}{2012}\natexlab{}.
\newblock \showarticletitle{{GPUVerify: a verifier for GPU kernels}}. In
  \bibinfo{booktitle}{\emph{ACM SIGPLAN Conference on Object-Oriented
  Programming, Systems, Languages, and Applications (OOPSLA)}}.
  \bibinfo{pages}{113--132}.
\newblock
\href{https://doi.org/10.1145/2384616.2384625}{doi:\nolinkurl{10.1145/2384616.2384625}}


\bibitem[Chatterjee et~al\mbox{.}(2025)]%
        {chatterjee2025proofwright}
\bibfield{author}{\bibinfo{person}{Bodhisatwa Chatterjee},
  \bibinfo{person}{Drew Zagieboylo}, \bibinfo{person}{Sana Damani},
  \bibinfo{person}{Siva Hari}, {and} \bibinfo{person}{Christos Kozyrakis}.}
  \bibinfo{year}{2025}\natexlab{}.
\newblock \bibinfo{title}{{ProofWright: Towards Agentic Formal Verification of
  CUDA}}.
\newblock
\showeprint[arxiv]{2511.12294}~[cs.SE]
\urldef\tempurl%
\url{https://arxiv.org/abs/2511.12294}
\showURL{%
\tempurl}


\bibitem[Chen et~al\mbox{.}(2023)]%
        {chen2023specdecoding}
\bibfield{author}{\bibinfo{person}{Charlie Chen}, \bibinfo{person}{Sebastian
  Borgeaud}, \bibinfo{person}{Geoffrey Irving}, \bibinfo{person}{Jean-Baptiste
  Lespiau}, \bibinfo{person}{Laurent Sifre}, {and} \bibinfo{person}{John
  Jumper}.} \bibinfo{year}{2023}\natexlab{}.
\newblock \bibinfo{title}{{Accelerating Large Language Model Decoding with
  Speculative Sampling}}.
\newblock
\showeprint[arxiv]{2302.01318}~[cs.CL]
\urldef\tempurl%
\url{https://arxiv.org/abs/2302.01318}
\showURL{%
\tempurl}


\bibitem[Chu et~al\mbox{.}(2026)]%
        {chu2026margingate}
\bibfield{author}{\bibinfo{person}{Kexin Chu}, \bibinfo{person}{Yang Zhou},
  {and} \bibinfo{person}{Wei Zhang}.} \bibinfo{year}{2026}\natexlab{}.
\newblock \bibinfo{title}{{MarginGate: Sparse Margin-Triggered Verification for
  Batch-Invariant LLM Inference}}.
\newblock
\showeprint[arxiv]{2605.30218}~[cs.LG]
\urldef\tempurl%
\url{https://arxiv.org/abs/2605.30218}
\showURL{%
\tempurl}


\bibitem[Cousot and Cousot(1977)]%
        {cousot1977abstract}
\bibfield{author}{\bibinfo{person}{Patrick Cousot} {and}
  \bibinfo{person}{Radhia Cousot}.} \bibinfo{year}{1977}\natexlab{}.
\newblock \showarticletitle{{Abstract interpretation: a unified lattice model
  for static analysis of programs by construction or approximation of
  fixpoints}}. In \bibinfo{booktitle}{\emph{ACM SIGPLAN-SIGACT Symposium on
  Principles of Programming Languages (POPL)}}. \bibinfo{pages}{238--252}.
\newblock
\href{https://doi.org/10.1145/512950.512973}{doi:\nolinkurl{10.1145/512950.512973}}


\bibitem[Dao et~al\mbox{.}(2022)]%
        {dao2022flashattention}
\bibfield{author}{\bibinfo{person}{Tri Dao}, \bibinfo{person}{Dan Fu},
  \bibinfo{person}{Stefano Ermon}, \bibinfo{person}{Atri Rudra}, {and}
  \bibinfo{person}{Christopher R{\'e}}.} \bibinfo{year}{2022}\natexlab{}.
\newblock \showarticletitle{{FlashAttention: Fast and Memory-Efficient Exact
  Attention with IO-Awareness}}. In \bibinfo{booktitle}{\emph{Conference on
  Neural Information Processing Systems (NeurIPS)}}.
  \bibinfo{pages}{16344--16359}.
\newblock
\href{https://doi.org/10.52202/068431-1189}{doi:\nolinkurl{10.52202/068431-1189}}


\bibitem[de~Moura and Bj{\o}rner(2008)]%
        {demoura2008z3}
\bibfield{author}{\bibinfo{person}{Leonardo de Moura} {and}
  \bibinfo{person}{Nikolaj Bj{\o}rner}.} \bibinfo{year}{2008}\natexlab{}.
\newblock \showarticletitle{{Z3: An Efficient SMT Solver}}. In
  \bibinfo{booktitle}{\emph{International Conference on Tools and Algorithms
  for the Construction and Analysis of Systems (TACAS)}}.
  \bibinfo{pages}{337--340}.
\newblock
\href{https://doi.org/10.1007/978-3-540-78800-3_24}{doi:\nolinkurl{10.1007/978-3-540-78800-3_24}}


\bibitem[{DeepSeek-AI} et~al\mbox{.}(2026)]%
        {deepseek2026v4}
\bibfield{author}{\bibinfo{person}{{DeepSeek-AI}} {et~al\mbox{.}}}
  \bibinfo{year}{2026}\natexlab{}.
\newblock \bibinfo{title}{{DeepSeek-V4: Towards Highly Efficient Million-Token
  Context Intelligence}}.
\newblock
\showeprint[arxiv]{2606.19348}~[cs.CL]
\urldef\tempurl%
\url{https://arxiv.org/abs/2606.19348}
\showURL{%
\tempurl}


\bibitem[Demmel and Nguyen(2013)]%
        {demmel2013reproducible}
\bibfield{author}{\bibinfo{person}{James Demmel} {and}
  \bibinfo{person}{Hong~Diep Nguyen}.} \bibinfo{year}{2013}\natexlab{}.
\newblock \showarticletitle{{Fast Reproducible Floating-Point Summation}}. In
  \bibinfo{booktitle}{\emph{IEEE Symposium on Computer Arithmetic (ARITH)}}.
  \bibinfo{pages}{163--172}.
\newblock
\href{https://doi.org/10.1109/ARITH.2013.9}{doi:\nolinkurl{10.1109/ARITH.2013.9}}


\bibitem[Driscoll et~al\mbox{.}(2026)]%
        {dubey2025volta}
\bibfield{author}{\bibinfo{person}{Benjamin Driscoll}, \bibinfo{person}{Kshitij
  Dubey}, \bibinfo{person}{Anjiang Wei}, \bibinfo{person}{Neeraj Kayal},
  \bibinfo{person}{Rahul Sharma}, {and} \bibinfo{person}{Alex Aiken}.}
  \bibinfo{year}{2026}\natexlab{}.
\newblock \showarticletitle{{Equivalence Checking of ML GPU Kernels}}. In
  \bibinfo{booktitle}{\emph{ACM SIGPLAN Conference on Object-Oriented
  Programming, Systems, Languages, and Applications (OOPSLA)}}.
\newblock
\urldef\tempurl%
\url{https://arxiv.org/abs/2511.12638}
\showURL{%
\tempurl}
\newblock
\shownote{To appear}.


\bibitem[Fedus et~al\mbox{.}(2022)]%
        {fedus2022switch}
\bibfield{author}{\bibinfo{person}{William Fedus}, \bibinfo{person}{Barret
  Zoph}, {and} \bibinfo{person}{Noam Shazeer}.}
  \bibinfo{year}{2022}\natexlab{}.
\newblock \showarticletitle{{Switch Transformers: Scaling to Trillion Parameter
  Models with Simple and Efficient Sparsity}}.
\newblock \bibinfo{journal}{\emph{Journal of Machine Learning Research}}
  \bibinfo{volume}{23}, \bibinfo{number}{120} (\bibinfo{year}{2022}),
  \bibinfo{pages}{1--39}.
\newblock
\urldef\tempurl%
\url{https://jmlr.org/papers/v23/21-0998.html}
\showURL{%
\tempurl}


\bibitem[Gond et~al\mbox{.}(2026)]%
        {gond2026llm42}
\bibfield{author}{\bibinfo{person}{Raja Gond}, \bibinfo{person}{Aditya~K
  Kamath}, \bibinfo{person}{Ramachandran Ramjee}, {and} \bibinfo{person}{Ashish
  Panwar}.} \bibinfo{year}{2026}\natexlab{}.
\newblock \showarticletitle{{LLM-42: Enabling Determinism in LLM Inference with
  Verified Speculation}}. In \bibinfo{booktitle}{\emph{ACM Symposium on
  Operating Systems Principles (SOSP)}}.
\newblock
\urldef\tempurl%
\url{https://www.microsoft.com/en-us/research/publication/llm-42-enabling-determinism-in-llm-inference-with-verified-speculation/}
\showURL{%
\tempurl}
\newblock
\shownote{To appear}.


\bibitem[He and {Thinking Machines Lab}(2025)]%
        {he2025defeating}
\bibfield{author}{\bibinfo{person}{Horace He} {and} \bibinfo{person}{{Thinking
  Machines Lab}}.} \bibinfo{year}{2025}\natexlab{}.
\newblock \bibinfo{title}{{Defeating Nondeterminism in {LLM} Inference}}.
\newblock
  \bibinfo{howpublished}{\url{https://thinkingmachines.ai/blog/defeating-nondeterminism-in-llm-inference/}}.
\newblock
\href{https://doi.org/10.64434/tml.20250910}{doi:\nolinkurl{10.64434/tml.20250910}}


\bibitem[Kwon et~al\mbox{.}(2023)]%
        {kwon2023pagedattention}
\bibfield{author}{\bibinfo{person}{Woosuk Kwon}, \bibinfo{person}{Zhuohan Li},
  \bibinfo{person}{Siyuan Zhuang}, \bibinfo{person}{Ying Sheng},
  \bibinfo{person}{Lianmin Zheng}, \bibinfo{person}{Cody~Hao Yu},
  \bibinfo{person}{Joseph Gonzalez}, \bibinfo{person}{Hao Zhang}, {and}
  \bibinfo{person}{Ion Stoica}.} \bibinfo{year}{2023}\natexlab{}.
\newblock \showarticletitle{{Efficient Memory Management for Large Language
  Model Serving with PagedAttention}}. In \bibinfo{booktitle}{\emph{ACM
  Symposium on Operating Systems Principles (SOSP)}}.
  \bibinfo{pages}{611--626}.
\newblock
\href{https://doi.org/10.1145/3600006.3613165}{doi:\nolinkurl{10.1145/3600006.3613165}}


\bibitem[Lattuada et~al\mbox{.}(2024)]%
        {lattuada2024verus}
\bibfield{author}{\bibinfo{person}{Andrea Lattuada}, \bibinfo{person}{Travis
  Hance}, \bibinfo{person}{Jay Bosamiya}, \bibinfo{person}{Matthias Brun},
  \bibinfo{person}{Chanhee Cho}, \bibinfo{person}{Hayley LeBlanc},
  \bibinfo{person}{Pranav Srinivasan}, \bibinfo{person}{Reto Achermann},
  \bibinfo{person}{Tej Chajed}, \bibinfo{person}{Chris Hawblitzel},
  \bibinfo{person}{Jon Howell}, \bibinfo{person}{Jacob~R. Lorch},
  \bibinfo{person}{Oded Padon}, {and} \bibinfo{person}{Bryan Parno}.}
  \bibinfo{year}{2024}\natexlab{}.
\newblock \showarticletitle{{Verus: A Practical Foundation for Systems
  Verification}}. In \bibinfo{booktitle}{\emph{ACM Symposium on Operating
  Systems Principles (SOSP)}}. \bibinfo{pages}{438--454}.
\newblock
\href{https://doi.org/10.1145/3694715.3695952}{doi:\nolinkurl{10.1145/3694715.3695952}}


\bibitem[Leviathan et~al\mbox{.}(2023)]%
        {leviathan2023specdecoding}
\bibfield{author}{\bibinfo{person}{Yaniv Leviathan}, \bibinfo{person}{Matan
  Kalman}, {and} \bibinfo{person}{Yossi Matias}.}
  \bibinfo{year}{2023}\natexlab{}.
\newblock \showarticletitle{{Fast Inference from Transformers via Speculative
  Decoding}}. In \bibinfo{booktitle}{\emph{International Conference on Machine
  Learning (ICML)}}. \bibinfo{pages}{19274--19286}.
\newblock
\urldef\tempurl%
\url{https://proceedings.mlr.press/v202/leviathan23a.html}
\showURL{%
\tempurl}


\bibitem[Lopes et~al\mbox{.}(2021)]%
        {lopes2021alive2}
\bibfield{author}{\bibinfo{person}{Nuno~P. Lopes}, \bibinfo{person}{Juneyoung
  Lee}, \bibinfo{person}{Chung-Kil Hur}, \bibinfo{person}{Zhengyang Liu}, {and}
  \bibinfo{person}{John Regehr}.} \bibinfo{year}{2021}\natexlab{}.
\newblock \showarticletitle{{Alive2: bounded translation validation for LLVM}}.
  In \bibinfo{booktitle}{\emph{ACM SIGPLAN Conference on Programming Language
  Design and Implementation (PLDI)}}. \bibinfo{pages}{65--79}.
\newblock
\href{https://doi.org/10.1145/3453483.3454030}{doi:\nolinkurl{10.1145/3453483.3454030}}


\bibitem[Pnueli et~al\mbox{.}(1998)]%
        {pnueli1998translation}
\bibfield{author}{\bibinfo{person}{Amir Pnueli}, \bibinfo{person}{Michael
  Siegel}, {and} \bibinfo{person}{Eli Singerman}.}
  \bibinfo{year}{1998}\natexlab{}.
\newblock \showarticletitle{{Translation Validation}}. In
  \bibinfo{booktitle}{\emph{International Conference on Tools and Algorithms
  for the Construction and Analysis of Systems (TACAS)}}.
  \bibinfo{pages}{151--166}.
\newblock
\href{https://doi.org/10.1007/BFb0054170}{doi:\nolinkurl{10.1007/BFb0054170}}


\bibitem[Shah et~al\mbox{.}(2024)]%
        {shah2024flashattention3}
\bibfield{author}{\bibinfo{person}{Jay Shah}, \bibinfo{person}{Ganesh
  Bikshandi}, \bibinfo{person}{Ying Zhang}, \bibinfo{person}{Vijay Thakkar},
  \bibinfo{person}{Pradeep Ramani}, {and} \bibinfo{person}{Tri Dao}.}
  \bibinfo{year}{2024}\natexlab{}.
\newblock \showarticletitle{{FlashAttention-3: Fast and Accurate Attention with
  Asynchrony and Low-precision}}. In \bibinfo{booktitle}{\emph{Conference on
  Neural Information Processing Systems (NeurIPS)}}.
  \bibinfo{pages}{68658--68685}.
\newblock
\href{https://doi.org/10.52202/079017-2193}{doi:\nolinkurl{10.52202/079017-2193}}


\bibitem[Shihipar(2026)]%
        {shihipar2026promptcaching}
\bibfield{author}{\bibinfo{person}{Thariq Shihipar}.}
  \bibinfo{year}{2026}\natexlab{}.
\newblock \bibinfo{title}{{Lessons from building Claude Code: Prompt caching is
  everything}}.
\newblock
  \bibinfo{howpublished}{\url{https://claude.com/blog/lessons-from-building-claude-code-prompt-caching-is-everything}}.
\newblock


\bibitem[{The SGLang Team}(2025)]%
        {sglang2025deterministic}
\bibfield{author}{\bibinfo{person}{{The SGLang Team}}.}
  \bibinfo{year}{2025}\natexlab{}.
\newblock \bibinfo{title}{{Towards Deterministic Inference in {SGLang} and
  Reproducible {RL} Training}}.
\newblock
  \bibinfo{howpublished}{\url{https://www.lmsys.org/blog/2025-09-22-sglang-deterministic/}}.
\newblock


\bibitem[{The vLLM Team}(2026)]%
        {vllm2026batchinvariance}
\bibfield{author}{\bibinfo{person}{{The vLLM Team}}.}
  \bibinfo{year}{2026}\natexlab{}.
\newblock \bibinfo{title}{{Batch Invariance}}.
\newblock
  \bibinfo{howpublished}{\url{https://docs.vllm.ai/en/latest/features/batch_invariance/}}.
\newblock
\newblock
\shownote{Accessed on 2026-09-08}.


\bibitem[Tillet et~al\mbox{.}(2019)]%
        {tillet2019triton}
\bibfield{author}{\bibinfo{person}{Philippe Tillet}, \bibinfo{person}{H.~T.
  Kung}, {and} \bibinfo{person}{David Cox}.} \bibinfo{year}{2019}\natexlab{}.
\newblock \showarticletitle{{Triton: an intermediate language and compiler for
  tiled neural network computations}}. In \bibinfo{booktitle}{\emph{ACM SIGPLAN
  International Workshop on Machine Learning and Programming Languages
  (MAPL)}}. \bibinfo{pages}{10--19}.
\newblock
\href{https://doi.org/10.1145/3315508.3329973}{doi:\nolinkurl{10.1145/3315508.3329973}}


\bibitem[{vLLM and TorchTitan Teams}(2025)]%
        {vllm2025bitwise}
\bibfield{author}{\bibinfo{person}{{vLLM and TorchTitan Teams}}.}
  \bibinfo{year}{2025}\natexlab{}.
\newblock \bibinfo{title}{{No More Train-Inference Mismatch: Bitwise Consistent
  On-Policy Reinforcement Learning with vLLM and TorchTitan}}.
\newblock
  \bibinfo{howpublished}{\url{https://vllm.ai/blog/2025-11-10-bitwise-consistent-train-inference}}.
\newblock


\bibitem[Wu et~al\mbox{.}(2025)]%
        {wu2025promptpeek}
\bibfield{author}{\bibinfo{person}{Guanlong Wu}, \bibinfo{person}{Zheng Zhang},
  \bibinfo{person}{Yao Zhang}, \bibinfo{person}{Weili Wang},
  \bibinfo{person}{Jianyu Niu}, \bibinfo{person}{Ye Wu}, {and}
  \bibinfo{person}{Yinqian Zhang}.} \bibinfo{year}{2025}\natexlab{}.
\newblock \showarticletitle{{I Know What You Asked: Prompt Leakage via
  {KV-Cache} Sharing in {Multi-Tenant LLM} Serving}}. In
  \bibinfo{booktitle}{\emph{Network and Distributed System Security Symposium
  (NDSS)}}.
\newblock
\href{https://doi.org/10.14722/ndss.2025.241772}{doi:\nolinkurl{10.14722/ndss.2025.241772}}


\bibitem[Yang et~al\mbox{.}(2025)]%
        {yang2025gateddelta}
\bibfield{author}{\bibinfo{person}{Songlin Yang}, \bibinfo{person}{Jan Kautz},
  {and} \bibinfo{person}{Ali Hatamizadeh}.} \bibinfo{year}{2025}\natexlab{}.
\newblock \showarticletitle{{Gated Delta Networks: Improving Mamba2 with Delta
  Rule}}. In \bibinfo{booktitle}{\emph{International Conference on Learning
  Representations (ICLR)}}.
\newblock
\urldef\tempurl%
\url{https://proceedings.iclr.cc/paper_files/paper/2025/hash/4904fad153f6434a7bcf04465d4be2cc-Abstract-Conference.html}
\showURL{%
\tempurl}


\bibitem[Yu et~al\mbox{.}(2022)]%
        {yu2022orca}
\bibfield{author}{\bibinfo{person}{Gyeong-In Yu}, \bibinfo{person}{Joo~Seong
  Jeong}, \bibinfo{person}{Geon-Woo Kim}, \bibinfo{person}{Soojeong Kim}, {and}
  \bibinfo{person}{Byung-Gon Chun}.} \bibinfo{year}{2022}\natexlab{}.
\newblock \showarticletitle{{Orca: A Distributed Serving System for
  {Transformer-Based} Generative Models}}. In
  \bibinfo{booktitle}{\emph{Symposium on Operating Systems Design and
  Implementation (OSDI)}}. \bibinfo{pages}{521--538}.
\newblock
\urldef\tempurl%
\url{https://www.usenix.org/conference/osdi22/presentation/yu}
\showURL{%
\tempurl}


\bibitem[Yuan et~al\mbox{.}(2025)]%
        {yuan2025numerical}
\bibfield{author}{\bibinfo{person}{Jiayi Yuan}, \bibinfo{person}{Hao Li},
  \bibinfo{person}{Xinheng Ding}, \bibinfo{person}{Wenya Xie},
  \bibinfo{person}{Yu-Jhe Li}, \bibinfo{person}{Wentian Zhao},
  \bibinfo{person}{Kun Wan}, \bibinfo{person}{Jing Shi}, \bibinfo{person}{Xia
  Hu}, {and} \bibinfo{person}{Zirui Liu}.} \bibinfo{year}{2025}\natexlab{}.
\newblock \showarticletitle{{Understanding and Mitigating Numerical Sources of
  Nondeterminism in LLM Inference}}. In \bibinfo{booktitle}{\emph{Conference on
  Neural Information Processing Systems (NeurIPS)}}.
  \bibinfo{pages}{169819--169851}.
\newblock
\href{https://doi.org/10.52202/085713-5653}{doi:\nolinkurl{10.52202/085713-5653}}


\bibitem[Yuan et~al\mbox{.}(2026)]%
        {yuan2026agenticworkloads}
\bibfield{author}{\bibinfo{person}{Yichao Yuan}, \bibinfo{person}{Ankita
  Nayak}, \bibinfo{person}{Souvik Kundu}, {and} \bibinfo{person}{Nishil
  Talati}.} \bibinfo{year}{2026}\natexlab{}.
\newblock \bibinfo{title}{{Agentic AI Workload Characteristics}}.
\newblock
\showeprint[arxiv]{2605.26297}~[cs.DC]
\urldef\tempurl%
\url{https://arxiv.org/abs/2605.26297}
\showURL{%
\tempurl}


\bibitem[Zhang et~al\mbox{.}(2026a)]%
        {zhang2026batchspec}
\bibfield{author}{\bibinfo{person}{Ranran~Haoran Zhang},
  \bibinfo{person}{Soumik Dey}, \bibinfo{person}{Ashirbad Mishra},
  \bibinfo{person}{Hansi Wu}, \bibinfo{person}{Binbin Li}, {and}
  \bibinfo{person}{Rui Zhang}.} \bibinfo{year}{2026}\natexlab{a}.
\newblock \showarticletitle{{Correctness Forensics for Batch Speculative
  Decoding: Diagnosing the Ragged Tensor Problem}}. In
  \bibinfo{booktitle}{\emph{Findings of the Association for Computational
  Linguistics: EMNLP}}.
\newblock
\urldef\tempurl%
\url{https://arxiv.org/abs/2510.22876}
\showURL{%
\tempurl}
\newblock
\shownote{To appear}.


\bibitem[Zhang et~al\mbox{.}(2026c)]%
        {zhang2026hijackkv}
\bibfield{author}{\bibinfo{person}{Yichi Zhang}, \bibinfo{person}{Zhiqi Wang},
  \bibinfo{person}{Huan Zhang}, {and} \bibinfo{person}{Yuchen Yang}.}
  \bibinfo{year}{2026}\natexlab{c}.
\newblock \showarticletitle{{{HijackKV}: New Threat in {Position-Independent}
  {KV} Cache Reuse}}. In \bibinfo{booktitle}{\emph{USENIX Security Symposium}}.
\newblock
\urldef\tempurl%
\url{https://www.usenix.org/conference/usenixsecurity26/presentation/zhang-yichi}
\showURL{%
\tempurl}


\bibitem[Zhang et~al\mbox{.}(2026b)]%
        {zhang2025tpinvariance}
\bibfield{author}{\bibinfo{person}{Ziyang Zhang}, \bibinfo{person}{Xinheng
  Ding}, \bibinfo{person}{Jiayi Yuan}, \bibinfo{person}{Rixin Liu},
  \bibinfo{person}{Huizi Mao}, \bibinfo{person}{Jiarong Xing}, {and}
  \bibinfo{person}{Zirui Liu}.} \bibinfo{year}{2026}\natexlab{b}.
\newblock \showarticletitle{{Deterministic Inference across Tensor Parallel
  Sizes That Eliminates Training-Inference Mismatch}}. In
  \bibinfo{booktitle}{\emph{International Conference on Machine Learning
  (ICML)}}.
\newblock
\urldef\tempurl%
\url{https://openreview.net/forum?id=5eZmlUyFpl}
\showURL{%
\tempurl}


\bibitem[Zhao et~al\mbox{.}(2025)]%
        {deepgemm2025}
\bibfield{author}{\bibinfo{person}{Chenggang Zhao}, \bibinfo{person}{Zhean Xu},
  \bibinfo{person}{Liang Zhao}, \bibinfo{person}{Jiashi Li},
  \bibinfo{person}{Chenhao Xu}, \bibinfo{person}{Anyi Xu},
  \bibinfo{person}{Shengyu Liu}, \bibinfo{person}{Kexing Zhou}, {and}
  \bibinfo{person}{Kuai Yu}.} \bibinfo{year}{2025}\natexlab{}.
\newblock \bibinfo{title}{{{DeepGEMM}: Clean and Efficient {BLAS} Kernel
  Library on {GPU}}}.
\newblock
  \bibinfo{howpublished}{\url{https://github.com/deepseek-ai/DeepGEMM}}.
\newblock


\bibitem[Zheng et~al\mbox{.}(2024)]%
        {zheng2024sglang}
\bibfield{author}{\bibinfo{person}{Lianmin Zheng}, \bibinfo{person}{Liangsheng
  Yin}, \bibinfo{person}{Zhiqiang Xie}, \bibinfo{person}{Chuyue Sun},
  \bibinfo{person}{Jeff Huang}, \bibinfo{person}{Cody~Hao Yu},
  \bibinfo{person}{Shiyi Cao}, \bibinfo{person}{Christos Kozyrakis},
  \bibinfo{person}{Ion Stoica}, \bibinfo{person}{Joseph~E. Gonzalez},
  \bibinfo{person}{Clark Barrett}, {and} \bibinfo{person}{Ying Sheng}.}
  \bibinfo{year}{2024}\natexlab{}.
\newblock \showarticletitle{{SGLang: Efficient Execution of Structured Language
  Model Programs}}. In \bibinfo{booktitle}{\emph{Conference on Neural
  Information Processing Systems (NeurIPS)}}. \bibinfo{pages}{62557--62583}.
\newblock
\href{https://doi.org/10.52202/079017-2000}{doi:\nolinkurl{10.52202/079017-2000}}


\end{thebibliography}

\appendix
\section{Verifying Paged Attention}
\label{app:attention-verification}

We expand \autoref{sec:pagedattention} using a mathematical presentation of
the attention tile IR.

\subsection{Properties and Logical Inputs}

Fix the element types, head geometry, scale, page size $P$, and tile sizes
$B_M,B_N$, with $B_N\mid P$. Consider one query head of dimension $d$ and
its KV head. Invocation $X$ has query/key lengths $0<s_X\le n_X$;
query row $0\le i_X<s_X$ represents position $p_X=n_X-s_X+i_X$.

Let $Q_X,O_X$ denote queries and outputs. Logical cache matrices $K_X,V_X$
use page table $\pi_X$: for key position $j$,
\[
K_X[j,:]=K_X^{\mathrm{phys}}
  [\pi_X(\lfloor j/P\rfloor),j\bmod P,:],
\]
and similarly for $V_X$. The engine supplies valid lengths and layouts;
physical addresses may differ.

\emph{Batch invariance} relates a packed request to its singleton run
with equal queries and logical cache. A second contract requires equal output
rows at the same logical token position across singleton runs with
different query/key lengths and row indices.
Writing $\equiv_b$ for bitwise equality, this contract requires
\[
\begin{gathered}
p_L=p_R=p,\qquad Q_L[i_L,:]\equiv_b Q_R[i_R,:],\\
K_L[0:p+1,:]\equiv_b K_R[0:p+1,:],\\
V_L[0:p+1,:]\equiv_b V_R[0:p+1,:]\\[2pt]
\Longrightarrow\quad O_L[i_L,:]\equiv_b O_R[i_R,:].
\end{gathered}
\]
This implication uses the assumptions below; concrete annotations
conservatively require equality of whole pages intersecting the prefix. The two
properties together relate packed prefill to decode.

\subsection{A Simplified Tile IR}

Omitting $X$, the selected query tile starts at $a=B_M\lfloor i/B_M\rfloor$.
Its $B_M\times d$ query matrix $Q_t$ processes key tiles $t=0,\ldots,N-1$,
where
\[
N=\left\lceil
  \frac{n-s+\min(s,a+B_M)}{B_N}
\right\rceil.
\]
Page lookups supply $B_N\times d$ tiles $K_t,V_t$, with zero-padded
out-of-range loads. For lanes $0\le r<B_M$, $0\le b<B_N$, and
$j=tB_N+b$, the mask is
\[
M_t[r,b]=(a+r<s)\land(j<n)\land(j\le n-s+a+r).
\]

Following FlashAttention's tiled online-softmax computation~\cite{dao2022flashattention},
the FP32 state holds row maxima $m$, normalization sums $z$, and weighted values
$U$, of shapes $B_M$, $B_M$, and $B_M\times d$, initialized to
$(-\infty,0,0)$. With shared base-two scale $c$, each iteration proceeds as follows.
Operations retain fixed floating-point types, parameters, and value-relevant
order; reductions run over key lanes and row scalars broadcast over columns.
\[
\begin{aligned}
G &\gets \operatorname{dot}(Q_t,K_t^{\mathsf T})\,c,\\
\widehat G &\gets \operatorname{select}(M_t,G,-\infty),\\
h &\gets \operatorname{any}_b M_t,\\
\widehat m &\gets \max(m,\operatorname{max}_b\widehat G),\\
\alpha &\gets \operatorname{exp}_2(m-\widehat m),\\
A &\gets \operatorname{select}
  (M_t,\operatorname{exp}_2(G-\widehat m),0),\\
\widehat z &\gets z\alpha+\operatorname{sum}_b A,\\
\widehat U &\gets \operatorname{dot}(\operatorname{cast}_{V}(A),V_t,U\alpha),\\
(m,z,U)&\gets\operatorname{select}
  (h,(\widehat m,\widehat z,\widehat U),(m,z,U)).
\end{aligned}
\]
Here $\operatorname{dot}$ is the fixed tile product, its optional third
operand is an accumulator, and $\operatorname{select}$ chooses componentwise.
The weights are cast to the cache element type before the second dot;
the normalization sum uses the uncast weights. The stored output is
\[
O_t=\operatorname{cast}_{O}\!\left(
U\operatorname{select}(z>0,1/z,0)\right).
\]
Here $\operatorname{cast}_{O}$ converts to the output element type.
We omit the auxiliary log-sum-exp output from this selected-row contract.

\subsection{Forward Value Analysis}

The analysis tracks index predicates for known zero, one, negative-infinity,
and Boolean values, plus provenance identifying unchanged incoming state.
Unclassified cells remain unknown. Masking and selection give
\[
\begin{aligned}
\neg M_t[r,b]&\Longrightarrow\widehat G[r,b]=-\infty,\\
\neg M_t[r,b]&\Longrightarrow A[r,b]=0,\\
(\forall b,\neg M_t[r,b])&\Longrightarrow h[r]=\mathsf{false}.
\end{aligned}
\]
The false selection arm establishes zero weights, preserved by the cast.
When $h[r]$ is false, the final selection copies the incoming state bit for bit, including
when unused candidate intermediates contain exceptional values.

\subsection{Backward Regional Dependencies}

Backward analysis starts from the selected output's $U$ and $z$ rows.
For $r_X=i_X-a_X$, define the visible key lanes
\[
J_t^X=\{b\mid 0\le b<B_N\ \land\ M_t^X[r_X,b]\}.
\]
Omitting $X$ again, the masked maximum and weights require scores only on
$J_t$. The dot rule
maps these dependencies to $Q_t[r,:]$ and $K_t[J_t,:]$, preserving the full
head-dimension reduction and its order.

Zero weights outside $J_t$ restrict dependencies to $V_t[J_t,:]$.
This assumes finite entries in the loaded $V_t$, including masked lanes,
and the trusted masked-dot rule's bit-level treatment of zero products and
accumulation. Finiteness is not checked at runtime. The rule removes
equality obligations on masked values while retaining every executed lane.

The verifier checks the raw query--key product and state update separately,
aligning operators, operand order, and shared scalars. Z3 checks region and
scalar obligations. Equal incoming state, selected queries, visible KV,
and masks then give equal selected next state.

\subsection{Relational Alignment and Composition}

For full attention with corresponding queries at $p$, both runs visit the first
$C=\lceil(p+1)/B_N\rceil$ tiles in order. Their selected lanes $r_L,r_R$
satisfy, for $0\le t<C$ and $0\le b<B_N$,
\[
M_t^L[r_L,b]=M_t^R[r_R,b]=(tB_N+b\le p).
\]
Since $p<n_L,n_R$, both masks agree despite different lengths and padding.
Page mappings supply the required logical KV equalities.

Extra tiles for later queries have $t\ge C$ and all-false selected masks,
hence preserve selected state exactly. This gives control-flow correspondence
up to identity steps: induction over paired updates establishes equal state,
extra iterations preserve it, and common normalization gives equal outputs.

Sliding-window attention adds $j>p-w$ for shared window size $w>0$.
The implemented traversal starts at
\[
L_X=\left\lfloor
\frac{\max(0,n_X-s_X+a_X-w+1)}{B_N}
\right\rfloor
\]
and ends at $N_X-1$. Both runs include all tiles intersecting the selected
window, from $\lfloor\max(0,p-w+1)/B_N\rfloor$ through $C-1$, in order.
Any extra leading or trailing tiles have all-false selected masks and
preserve state exactly. Thus the same alignment argument applies even when
the loop starts differ. The contract retains full-prefix equality and
supplies no cache-eviction guarantee.

Z3 checks the region, mask, and loop-geometry obligations. Lifting these
checks to bitwise value equality and composing the iteration argument rely
on the trusted verifier reasoning and operation rules in
\autoref{sec:pagedattention}, together with the compiler/GPU assumptions.
Tests support those assumptions on our configurations; input-dependent
finiteness remains an explicit premise.

\end{document}